\definecolor{blue_links}{RGB}{13,0,180} 
\definecolor{lightblue}{rgb}{0.9,0.9,1}
\newtheorem{theorem}{Theorem}[section]
\newtheorem{lemma}[theorem]{Lemma}
\newtheorem{proposition}[theorem]{Proposition}
\theoremstyle{definition}
\newcommand{\Rz}{\mathbb{R}}
\newcommand{\Nz}{\mathbb{N}}
\newcommand{\Zz}{\mathbb{Z}}
\newcommand{\Tz}{\mathbb{T}}
\newcommand{\osc}{\text{\rm osc}\,}
\newcommand{\Lip}{\text{\rm Lip}\,}
\renewcommand{\d}{\text{\rm d}}
\newcommand{\mo}{\, {\rm mod \,1} }
\begin{document}

\title[Crystallization in a one-dimensional
  periodic landscape]{Crystallization in a one-dimensional
  \\ periodic landscape}

\author{Manuel Friedrich}
\address[Manuel Friedrich]{Applied Mathematics,  
Universit\"{a}t M\"{u}nster, Einsteinstr. 62, D-48149 M\"{u}nster, Germany}
\email{manuel.friedrich@uni-muenster.de}
\urladdr{https://www.uni-muenster.de/AMM/Friedrich/index.shtml}

\author{Ulisse Stefanelli}
\address[Ulisse Stefanelli]{Faculty of Mathematics, University of Vienna, 
Oskar-Morgenstern-Platz 1, 1090 Wien, Austria  and  Istituto di Matematica
Applicata e Tecnologie Informatiche \textit{{E. Magenes}}, v. Ferrata 1, 27100
Pavia, Italy.}
\email{ulisse.stefanelli@univie.ac.at}
\urladdr{http://www.mat.univie.ac.at/$\sim$stefanelli}

\keywords{Crystallization, hard spheres, periodic landscape, ionic dimers, epitaxial growth.}

\begin{abstract} 
We consider the crystallization problem for a finite one-dimensional collection of identical
hard spheres in a periodic energy landscape. This issue
arises in connection with the investigation of crystalline states of
ionic dimers, as
well as in epitaxial growth on a crystalline substrate in presence of
lattice mismatch. 
Depending on
the commensurability of the radius of the sphere and the period of the
landscape, we  discuss the possible emergence of crystallized
states.  In particular, we prove that crystallization in
arbitrarily long chains is {\it generically} not to be expected. 

\end{abstract}

\subjclass[2010]{82D25.} 
\maketitle

\pagestyle{myheadings}

\section{Introduction}
The emergence of crystalline states at low temperatures is a
common phenomenon in material systems. Its rigorous mathematical
description
poses severe mathematical challenges even at the quite simplified
setting of Molecular Mechanics, where configurations of particles 
interacting via classical potentials are considered
\cite{Friesecke-Theil15,Lewars}. Here,
{\it crystallization} corresponds to the periodicity of ground state
configurations, an instance which in many cases is still eluding a
complete mathematical understanding. In fact,
rigorous mathematical crystallization results are  scarce  and often
limited to very specific choices of data \cite{Blanc}. 
Specifically, interactions with the environment are
usually neglected or
assumed to be homogeneous. 

We intend to progress in this quest by addressing here the case of
 finite one-dimensional crystallization in
a periodic, possibly nonconstant  energy landscape. Given the configuration
$\{x_1,\dots,x_n\} \subset \Rz^n$, we consider the {\it energy}
$$E=\sum_{i}v_1(x_i) +\frac12\sum_{i\not = j}v_2(|x_i-x_j|).$$
The {\it landscape potential} $v_1: \Rz \to [0,\infty)$ is assumed to be $1$-periodic,  piecewise continuous,  and lower semicontinuous with   $\min v_1=0$. 
The {\it interaction potential} $v_2: \Rz_+\to \Rz\cup
\{\infty\} $ is of {\it hard-sphere}
type at distance $\alpha>0$, namely $v_2 =\infty$ on $[0,\alpha)$,
$ v_2(\alpha)=-1$, and $v_2 =0 $ on $(\alpha,\infty)$,  see  \cite{Heitman-Radin80}. 
A collection of $n$ particles is called an {\it $n$-crystal} (or, simply,
{\it crystal}) if it is of the
form $\{x, x+\alpha, \ldots,   x+ \alpha(n-1)\}$ for some $x\in
\Rz$, see Figure~\ref{figure}.   If all ground-state $n$-particle
configurations are  $n$-crystals,  we say that 
{\it $n$-crystallization} holds. We call {\it crystallization} the
case when
$n$-crystallization holds for all $n$. 

\begin{figure}
  \centering
  \pgfdeclareimage[width=139mm]{figure}{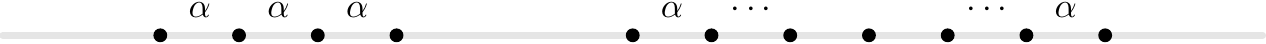}
    \pgfuseimage{figure}
\caption{A $4$-crystal and a $7$-crystal.}
 \label{figure}
\end{figure}

 The aim of this note is to investigate  crystallization
under different choices for $v_1$
and $\alpha$. Our main result states that crystallization does not {\it
  generically} hold. More precisely, we have the following.

\begin{theorem}[Generic noncrystallization]\label{thm: negative0}
For all given $\alpha$ and $v_1$ as above and each $\varepsilon>0$, there exist
$\alpha^\varepsilon$ and $v_1^\varepsilon$ as above with $|\alpha -
\alpha^\varepsilon|<\varepsilon$ and  $\| v_1 -
v_1^\varepsilon\|_{L^\infty(0,1)}<\varepsilon$,  and a strictly increasing sequence
$(n_k)_{k\in \Nz} \subset \Nz$ such that $n_k$-crystallization does
not hold for the energy $E^\varepsilon$ defined from
$\alpha^\varepsilon$ and $v_1^\varepsilon$.
\end{theorem}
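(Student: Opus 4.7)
The plan is to engineer a perturbation $(\alpha^\varepsilon, v_1^\varepsilon)$ so that the rigid crystal spacing $\alpha^\varepsilon$ is in near-resonance with the landscape period $1$ and the landscape carries a small localized barrier: the crystal will then be forced to pay a linear landscape cost, whereas a multi-cluster non-crystal, paying only a bounded extra bond cost, can be fitted entirely inside landscape wells and will win for an infinite sequence of particle counts~$n_k$.

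I would first set $v_1^\varepsilon := v_1 + \psi$, where $\psi \geq 0$ is a piecewise continuous lower-semicontinuous bump of height $M<\varepsilon$ supported on a short interval $B$ of width $\beta$, with $\psi(x_0)=0$ at some minimizer $x_0$ of $v_1$, so that all structural hypotheses on $v_1^\varepsilon$ are preserved and $\|v_1^\varepsilon-v_1\|_{L^\infty(0,1)}<\varepsilon$. By Dirichlet's theorem I would then pick coprime $p,q\in\Zz$ with $q$ large and $|\alpha-p/q|<1/q^2$, and set $\alpha^\varepsilon := p/q+\eta$ for a small irrational $\eta$ with $|\eta|<M/q^2$, ensuring $|\alpha-\alpha^\varepsilon|<\varepsilon$ for $q$ large enough. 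The crystal is then analysed at the scale of $q$-particle super-blocks: each super-block occupies the approximately equispaced positions $\{y+s/q : s=0,\dots,q-1\}$ modulo $1$, and consecutive super-blocks drift by $q\eta$ modulo $1$; the super-landscape $V_q(t):=\sum_{s=0}^{q-1}v_1^\varepsilon(t+s/q)$ is $1/q$-periodic, with $\psi$ contributing an $M$-barrier of width $\beta$ and a super-well of complementary width $1/q-\beta$ in each $1/q$-cell.

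For $n_k:=kqN_0$ with $k\ge 2$ and $N_0\sim(1/q-\beta)/(q\eta)$ the maximal number of consecutive super-blocks fitting in a super-well, the best $n_k$-crystal must place at least $(k-1)N_0$ super-blocks outside the super-well and hence pays landscape cost at least $(k-1)N_0 M$, whereas the non-crystal made of $k$ well-separated sub-crystals of $qN_0$ particles each, aligned with $k$ distinct super-wells of the landscape, has vanishing $\psi$-contribution and $n_k-k$ bonds. A direct comparison shows that the non-crystal strictly wins as soon as $N_0 M>1$, which is exactly the coherence inequality guaranteed by the choice $|\eta|<M/q^2$, yielding the required infinite failure sequence $(n_k)_{k\ge 2}$. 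The main obstacle is that for generic $v_1$ the $v_1$-contribution to $V_q$ may itself oscillate by an amount comparable to or larger than $M=O(\varepsilon)$, thereby compromising the clean super-well/super-barrier dichotomy; one has to choose $q$ so that the $1/q$-grid is nearly stationary for $v_1$ (e.g.\ via a discrepancy estimate on $v_1$) and then handle the remaining boundary and intra-block $r\eta$-corrections quantitatively, which is the technical heart of the argument.
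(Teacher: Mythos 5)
Your strategy (near-rational resonance: $\alpha^\varepsilon=p/q+\eta$ plus a narrow barrier, so that the slow drift $q\eta$ forces a long crystal through the barrier while a split configuration can realign with the wells) is genuinely different from the paper's, which keeps $\alpha^\varepsilon$ an \emph{algebraic irrational}, replaces $v_1$ by a piecewise constant approximation with a jump located at a point of large discrepancy, and plays a $\log n/n$ discrepancy \emph{lower} bound (valid along a subsequence for every irrational rotation) against a $n^{-1+\eta}$ \emph{upper} bound (valid for algebraic $\alpha$), closing the argument with a doubling iteration on the error $e(n)=V_n^*-n\int v_1$. However, your proposal has a genuine gap precisely at the point you flag as ``the technical heart'': the interference of the unperturbed $v_1$. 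The split beats the crystal only if $k\Delta+(k-1)$ is dominated by the crystal's barrier cost, where $\Delta:=\min_{y\in G}V^{v_1}_{qN_0}(y)-\min_y V^{v_1}_{qN_0}(y)$ and $G$ is the (very thin, of measure $\sim q^2\eta$ in total) set of starting points for which a $qN_0$-sub-crystal avoids the barrier. Both the barrier cost per sub-crystal ($\sim M\beta/(q\eta)$) and the natural bounds on $\Delta$ (e.g.\ $\Delta\lesssim qN_0\,\osc v_1\sim \osc v_1/(q\eta)$, or a Koksma-type bound $\sim \mathrm{Var}(v_1)\,\beta/(q\eta)$) scale like $1/\eta$, so shrinking $\eta$ does not separate them: one is left needing something like $M\beta \gtrsim \osc v_1$, which fails for every $v_1$ with $\osc v_1\ge\varepsilon$ since $M<\varepsilon$. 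Your proposed remedy --- choose $q$ so that $t\mapsto\sum_{s<q}v_1(t+s/q)$ is nearly constant --- is exactly what would be needed, but it is unsubstantiated and is false in general for merely piecewise continuous $v_1$ (already $v_1=\chi_{[0,c)}$ with $c$ irrational gives oscillation of order $1$ for every $q$, and Lipschitz $v_1$ gives oscillation of order $\Lip v_1$, not $o(1)$). This is not a routine estimate to be filled in later; it is the step where the proof either succeeds or fails, and the paper avoids it entirely by \emph{replacing} $v_1$ with a piecewise constant $v_1^\varepsilon$ whose entire deviation from its mean is expressible through discrepancies of finitely many intervals.

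A second, more localized error: your lower bound ``at least $(k-1)N_0$ super-blocks outside the super-well, hence landscape cost at least $(k-1)N_0M$'' is incorrect. The super-landscape $V_q$ is $1/q$-periodic, so a super-block that drifts out of one well crosses a barrier of width only $\beta$ and then enters the \emph{next} well, where it again pays nothing. The crystal's $\psi$-cost is therefore of order $k\beta M/(q\eta)$ (number of reference points landing in barriers times $M$), not $(k-1)(1/q-\beta)M/(q\eta)$. The conclusion can still be rescued --- one needs $\beta M/(q\eta)>\text{const}$ rather than your ``coherence inequality'' $N_0M>1$, and $\eta$ is free --- but as written the stated inequality $|\eta|<M/q^2$ does not deliver what is claimed. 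In short: the architecture is plausible and pleasantly elementary, but the proof is not complete, and the missing step (controlling the background $v_1$ along the $1/q$-grid) is the hard part, not a technicality.
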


In addition to this generic negative result, which is proved in
 Section  \ref{sec:gg}, we discuss different {\it nongeneric} settings
where crystallization does hold.  
Two quite different scenarios arise, depending on the rationality of
$\alpha$. 

In case $\alpha$ is rational  (a nongeneric property),
the crystallization problem can be solved by localized
arguments.  In
particular, Theorem \ref{thm:rat} states that  $n$-crystallization
holds  under some specific conditions on $v_1$ which are independent of $n$ but only  depend  on the irreducible fraction of $\alpha$.   In fact, we are able to
present a hierarchy of
sufficient conditions entailing crystallization, see Proposition
\ref{thm:hi}.

The case of $\alpha$ irrational is tackled in 
Section \ref{sec:irrational} instead. Here, the ergodic character of
the map $x \in [0,1)\mapsto (x+\alpha)_{\rm mod \,1}$ comes into
play. We resort in using and extending some tools
from the theory of {\it low discrepancy sequences} \cite{Drmota}, carefully
quantifying the extent at which the potential landscape $v_1$ is
explored by the latter map. Such quantitative information is instrumental in investigating
 crystallization.  Here, we are able to find
a specific class of landscape potentials $v_1$ entailing
crystallization, see Theorem \ref{thm: positive}  and the
discussion thereafter.  
 
 The specific form of  the  energy $E$ is inspired by the modelization
of a {\it dimer} of elements $A$ and $B$ at zero  temperature. By labelling the corresponding atoms as  $  x_i
$ and $  y_\ell $, a possible choice for the energy of the dimer is
$$
\frac12\sum_{i\not = j}v_2^A(|x_i-x_j|) + \frac12 \sum_{\ell\not =
  k} v_2^B(|y_\ell - y_k|)+ \sum_{i,\ell}v_2^{\rm
  int}(|x_i - y_\ell|). 
$$
Here, $v_2^A$ and $v_2^B$ are the intraspecific two-body interaction
energies for atoms of type $A$ and $B$,  minimized  at  the
interaction distance $0<
\alpha\not =1$ and $1$, respectively,  and  $v_2^{\rm
  int}$ is an interaction energy between types. Assume now that  type $B$
has already formed a  one-dimensional infinite rigid crystal, say
$\Zz$, see  \cite{Fanzon,Levi}  for a similar approach.  By
removing the self-interacting $v_2^B$ terms, the energy can hence be
rewritten   as a function  of
$\{x_1,\dots,x_n\}$ in the form of $E$ by letting $v_2=v_2^A$ and 
$$v_1(x) := \sum_{\ell \in \Zz}v_2^{\rm int}(|x - \ell|).$$
 By assuming  that the latter series converges for all $x \in [0,1)$,  the resulting landscape potential $v_1$ is $1$-periodic. 

Energies of the type of $E$ may also arise in modeling the {\it epitaxial growth}
of a first layer of type $A$ on top of an underlying rigid crystal of type $B$ in
presence of lattice mismatch. Here, the potential $v_1$ represents the
effect of the rigid substrate, with periodicity $1$. The deposited
layer $\{x_1,\dots,x_n\}$ is then expected to optimize intraspecific atomic interactions
in a given nontrivial potential landscape. 

Crystallization problems 
 have received constant attention in the last
decades. The reader is referred to the recent survey by {\sc Blanc \&
  Lewin} \cite{Blanc} for a comprehensive account on the literature. 
To the best of our knowledge,
crystallization results in periodic landscapes are still currently
unavailable. We contribute here in extending the classical one-dimensional
crystallization theory \cite{Gardner,Hamrick,Ventevogel} toward the
discussion of molecular compounds. 

Numerical studies  on 
crystallization in multicomponent systems are abundant,
see \cite{Assoud,Assoud2,Eldridge,Levi,Xu}, just to mention a few. On
the other hand, rigorous crystallization results for  such  systems are  scarce.  A first result in this direction is due to {\sc Radin}
\cite{Radin86}, who studies a specific multicomponent two-dimensional system showing
quasiperiodic ground states. 
{\sc B\'etermin, Kn\"upfer, \& Nolte}
\cite{Betermin} investigate conditions for crystallization of
{\it alternating} one-dimensional 
configurations interacting via a smooth interaction density $v_2$.  Two
dimensional dimer crystallization results in
hexagonal and square geometries   for a hard-spheres
interaction $v_2$ are  given  in \cite{kreutz, kreutz2}.

\section{Preliminaries}\label{sec:preliminaries}

In this section we collect  some preliminary discussion and fix notation.
 
 To start with,  one  can assume with no loss of generality that $\alpha <1$. Indeed,
if $\alpha =1$, then ground-state configurations are obviously
$n$-crystals with all particles sitting  at $x_1+ \Nz$, where $x_1
\in [0,1)$ is a minimizer of $v_1$.  Since $\min v_1 =0$, the corresponding energy is $E=(n-1)v_2(\alpha) = -(n-1)$. On the other hand,
if $\alpha >1$, we can rescale the problem by   redefining $\alpha$ as
$\alpha/\lceil \alpha \rceil\leq 1$, where $\lceil \alpha
\rceil=\min\{z \in \Zz \, : \, \alpha \leq z\}$, and by  replacing $t
\mapsto v_1(t)$ with $t \mapsto  v_1(\lceil \alpha \rceil\,t) $.  We also use the notation $(x)_{\rm mod \,1}:= x - \lfloor x \rfloor $ for all $x \in \Rz$, where $\lfloor x
\rfloor=\max\{z \in \Zz \, : \, x \geq z\}$.  Given any $A\subset \Rz$, we indicate by
$\chi_A$ the corresponding {\it characteristic function}, namely,
$\chi_A(x)=1$ if $x\in A$ and $\chi_A(x)=0$ elsewhere.

The  total contribution of the landscape  potential  to the
energy  of  the  $n$-crystal with the leftmost particle sitting at $x$  (i.e., the collection of points $\{x, x+\alpha, \ldots,   x+ \alpha(n-1)\}$)  reads  
$$ V_n(x):=\sum_{j=0}^{n-1} v_1(x+j\alpha).$$
Let us introduce the notation $V^*_n := \min V_n $ and indicate
with $x^*_n \in  [0,1) $ (possibly not  uniquely)  a minimizer, i.e.,  $V^*_n=V_n(x^*_n)$.  Note that a minimizer exists since $v_1$ is lower semicontinuous.  
If  an   $n$-particle  ground state is an
$n$-crystal, then necessarily its leftmost
particle sits at a point $x_n^*$ (possibly not unique).
Note that one always has that 
\begin{equation}\label{eq:basic}
V^*_p + V^*_q  \leq V^*_{p+q}  \quad {\forall \, p,  q   \in \Nz } 
\end{equation}
as the minimization on the right-hand side is performed under an extra constraint with
respect to those on the left-hand side. 

Our first aim is to  elucidate  the role of the somewhat {\it opposite} relation
\begin{equation}\label{suff}V^*_{p+q} < V^*_p + V^*_q +1\quad \forall \,
  p, q \ \ \text{with} \ \ 
  p+q\leq n.
\end{equation}
Under condition \eqref{suff}, one has that the splitting of an
$n$-crystal into smaller crystals is energetically not  favored. 
(In the following, we use the term {\it splitting} to refer to a
configuration made of different crystals.)   Consider indeed an $n$-particles configuration made of $j$ different
crystals $\{x^j_1,\dots,x^j_{n_j}\}$ with $n_1 + \dots +n_j= n$. 
In
case $j \geq 2$, one can use \eqref{suff}  and $\min v_2 =-1$   in order to get that 
$$  E \geq \sum\nolimits_j (V^*_{n_j} - (n_j-1)) \stackrel{\eqref{suff}}{>}
V^*_n - (j-1)- n + j =   V^*_n  -(n -1).$$
The above right-hand side is
the energy of the $n$-crystal, which is then favorable with respect to
any of its splittings, regardless of the number  $j \ge 2$  of splitting
parts.

On the other hand, condition \eqref{suff} is {\it almost} necessary
for $n$-crystallization to hold. Indeed, if one has 
\begin{equation}
V^*_{p+q} > V^*_p + V^*_q +1 \ \text{for some $p, q $ with
  $p+q =  n$,}\label{nec2}
\end{equation}
then $n$-crystals are not ground states as splitting an $n$-crystal
into a $p$- and a $q$-crystal lowers the energy.  In case 
equality holds in  condition  \eqref{nec2} for
some $p+q=n$, $n$-crystals and the union of a $p$-
and a $q$-crystal are equienergetic. In conclusion, we have checked the following.

\begin{proposition}[Key condition]\label{thm:key}
Condition \eqref{suff} implies $n$-crystallization. On the other hand,
 $n'$-crystallization for all $n'\leq n$ implies \eqref{suff}.
\end{proposition}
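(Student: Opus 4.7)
The plan is to exploit the block decomposition that the hard-sphere interaction forces on any admissible (i.e.\ finite-energy) configuration. After ordering $x_1<\dots<x_n$, I would partition the particles into $j$ maximal consecutive \emph{clusters} in which neighboring points sit at distance exactly $\alpha$, with different clusters separated by strictly more than $\alpha$. Each cluster is then itself a crystal; writing its size as $n_i$ with $\sum_{i=1}^j n_i=n$, the pair-interaction contribution to $E$ evaluates to exactly $-(n-j)$ (only consecutive-in-cluster pairs contribute, each giving $-1$), while the landscape contribution is bounded below by $\sum_{i=1}^j V^*_{n_i}$. The reference $n$-crystal anchored at $x^*_n$ realizes the energy $V^*_n-(n-1)$.

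For the first implication I would assume \eqref{suff} and take any configuration that is not an $n$-crystal, so that $j\geq 2$. Applying \eqref{suff} inductively $j-1$ times yields $\sum_i V^*_{n_i} > V^*_n-(j-1)$, and combining this with the pair contribution gives the strict bound $E > V^*_n-(n-1)$. Since $j=1$ means all consecutive gaps are $\alpha$ and therefore forces the configuration to be an $n$-crystal, every ground state must be an $n$-crystal, i.e.\ $n$-crystallization holds.

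For the converse I would argue by contrapositive. Suppose \eqref{suff} fails at some $p,q$ with $m:=p+q\leq n$, so that $V^*_{p+q}\geq V^*_p+V^*_q+1$. I would construct the union of a $p$-crystal anchored at $x^*_p$ and a $q$-crystal anchored at $x^*_q+N$ for some integer $N$ large enough that no cross pairs interact. The $1$-periodicity of $v_1$ ensures that the translated $q$-crystal still realizes $V^*_q$, so the total energy of the splitting equals $V^*_p+V^*_q-(m-2)$. Comparing with the $m$-crystal energy $V^*_m-(m-1)=V^*_{p+q}-(m-1)$ and using the failed inequality, the splitting is at least as good as any $m$-crystal while not being one itself. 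Hence $m$-crystallization fails for some $m\leq n$, which is the contrapositive of the claim.

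Both directions are essentially present in the discussion preceding the statement, so I do not expect a serious obstacle. The most delicate point is the iteration of \eqref{suff}, where strictness of the inequality is used $j-1$ times and must propagate through the sum; the other minor technical step is the integer-translation trick in the converse, which is what turns the formal additive bound $V^*_p+V^*_q$ into the landscape energy of a genuine admissible $m$-particle splitting.
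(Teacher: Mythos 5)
Your proof is correct and takes essentially the same route as the paper's own argument (the discussion surrounding \eqref{suff} and \eqref{nec2}): decomposition of an admissible configuration into $j$ maximal clusters with pair energy $-(n-j)$, a $(j-1)$-fold iteration of \eqref{suff} to bound $\sum_i V^*_{n_i}$ from below, and the integer-translation splitting of a $(p+q)$-crystal for the converse. No gaps.
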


Owing to the latter, in order to check for the validity of $n$-crystallization, 
one is left with  checking the key condition
\eqref{suff}. This check is at the core of all our arguments in the remainder of the
paper.  It will  be investigated under different settings for $\alpha$ and $v_1$. As
already mentioned  in the Introduction, the analysis 
depends strongly on  $\alpha$ being rational or
not. Correspondingly, our discussion is  divided in the
coming Sections \ref{sec:rational} ($\alpha$ rational) and \ref{sec:irrational} ($\alpha$ irrational).

Let us conclude this section by some remarks:

 a)  At first, we would like
to record that, differently from the trivial
case $v_1  \equiv  0$, $n$-crystallization may indeed depend on the number $n$ of
particles involved. We present here an example illustrating this
fact.  
Let $\alpha=1/2$ and $v_1$ be such that 
$$v_1(x)=h({1 - 4(x-1/2)^2})\quad \ \ { \text{ for } x \in [0,1) }$$
for some $h>0$. Then, one has that 
 $$V^*_{n}= h\lfloor n/2\rfloor \quad \forall \, n \in \Nz.$$ 
In particular, given $p+q= n$,  there holds 
$$ h  (k+j)+h\ell = V^*_n , \quad  h  (k+j) = V^*_p + V^*_q,$$
where $k=\lfloor p/2 \rfloor$, $j=\lfloor q/2 \rfloor$, and $\ell =
\lfloor (n-2k-2j)/2\rfloor$. Note that  $\ell=0$ unless both $p$
and $q$ are odd, in which case $\ell=1$. Then, we conclude that the
sufficient condition \eqref{suff} holds if $h<1$ whereas, if $h >1$,
 then  condition \eqref{nec2} implies that ground states
are not crystalline for $n \geq 6$ and even. 
In case $h=1$  and $n$ even,   crystallized states and noncrystallized states are
energetically equivalent. On the other hand, ground states
are crystalline for all $n \geq 3$ odd, regardless of the value of $h$.

 b)  One may wonder if the key condition \eqref{suff} could be weakened  to  
\begin{equation}\label{suffweak}V^*_{p+q} < V^*_p + V^*_q +1\quad \forall \, 
  p,   q \ \ \text{with} \ \ 
  p+q = n,
\end{equation}
namely, by restricting to the splitting into exactly two subcrystals
at level $n$ only.
This is however not the case, as the following simple example shows.  Let $\alpha = 1/4$,  take 
$v_1=0$ on $A=\{0,1/4, 3/4  \}$, and $v_1=h>2$ out of a very small
neighborhood of $A$. One can readily compute that $V^*_1=V^*_2=0,
V^*_3=V^*_4=V^*_5=h$. Hence, condition \eqref{suffweak}
holds for $n=5$. In particular, no splitting of a $5$-crystal into
exactly two smaller crystals is favorable. On the other hand, the ground states for $n=5$ are
not crystalline as one can favorably split a $5$-crystal into two
$2$-crystals and a $1$-crystal since   $V^*_1+2 V^*_2 +2 = 2 <h=
V^*_5$. 

 c)  Eventually, we  present an example showing that, in
 general, the minimizers $x^*_p$ of $V_p$ may depend on $p$. Let $\alpha=1/4$,  $v_1 = 0$ only on $A = \lbrace 0, 1/4 \rbrace$, and $v_1 = 1$ out of a very small neighborhood of $A$.  One readily gets that $x_1^*\in\{0,1/4\}$, $x_2^*=0$,
 $x^*_3\in\{0,3/4\}$, $x_4^* \in\{0,1/4,1/2,3/4\}$,  $x^*_5 \in \lbrace 0, 1/4 \rbrace$, and  $x^*_6=0$. 
In
 particular, the position of the leftmost particle   of a crystal ground state
 may depend on its length.

\section{Crystallization for rational $\alpha$}\label{sec:rational}
Let $\alpha$ be rational. By possibly rescaling, as explained at the beginning of Section \ref{sec:preliminaries}, one can assume with no loss of generality that $m\alpha =1$ for some
$m\in \Nz$. Note that in this case 
\begin{equation}
  \label{eq:exact}
  V_{pm}^* = pV^*_m\quad \forall \, p \in \Nz. 
\end{equation}
The aim of this section is that of showing that  crystallization,
namely $n$-crystallization
{\it for all} $n $,  can be achieved under some  version of the key
condition \eqref{suff} which is {\it localized} with respect to
$n$.   More precisely, we consider  the  condition 
\begin{equation}\label{suff3}
(p+q)\frac{V^*_{m}}{m} < V^*_p + V^*_q  + 1\quad \forall \, p,  \,q \in \Nz\  \
\text{with}  \ \ 1 \le p,q \le m-1.
\end{equation}
The condition is in the same spirit as the key condition \eqref{suff},
with the difference that on the left-hand side the minimal
contribution of the landscape potential to the energy of an
$(p+q)$-crystal is replaced by   the 
 {\it minimal per-particle energy} 
$V_m^* / m$ in an $m$-crystal times the number of particles $p+q$. On
the one hand,  condition  \eqref{suff3} is  stronger 
than \eqref{suff}  since 
$$(p+q) V_m^* \stackrel{\eqref{eq:exact}}{=} V_{(p+q)m}^*\stackrel{\eqref{eq:basic}}{\ge} m V_{p+q}^*,    $$
and therefore $(p+q) V_m^* /m \ge V^*_{p+q}$. On the other hand,   in comparison with \eqref{suff}, condition \eqref{suff3}
   involves only the energy of $p$-crystals of length at most $m$,  and is
thus weaker and  easier to
check.  In particular, it is {\it local} with respect to $n$. 



The main result of this section is the following 
statement,  turning the local
condition \eqref{suff3} on  crystals of length  at most  $m$ into  
crystallization of infinitely large crystals.

\begin{theorem}[Crystallization]\label{thm:rat} Condition
  \eqref{suff3} implies crystallization.
\end{theorem}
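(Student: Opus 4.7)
The plan is to convert the hypothesis \eqref{suff3} into the full key condition \eqref{suff} for every $n$; once this is done, Proposition \ref{thm:key} delivers $n$-crystallization at every scale. So the target is to establish $V^*_{p+q}<V^*_p+V^*_q+1$ for every pair $p,q\in\Nz$.

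To homogenize both the hypothesis and the goal I would introduce the normalized excess
$$f(p):=V^*_p-\frac{p}{m}V^*_m.$$
Three properties come essentially for free: (i) $f$ is superadditive, $f(p+q)\geq f(p)+f(q)$, by \eqref{eq:basic}; (ii) $f(p)\leq 0$ for every $p$, by the bound $V^*_{p+q}\leq (p+q)V^*_m/m$ already derived in the excerpt from \eqref{eq:exact} and \eqref{eq:basic}; and (iii) $f(am)=0$ for every $a\in\Nz$, by \eqref{eq:exact}. Property (ii) shows that the target follows once I know $f(p)+f(q)>-1$ for all $p,q\geq 1$, because then
$$V^*_{p+q}\leq \frac{p+q}{m}V^*_m = V^*_p+V^*_q-(f(p)+f(q))<V^*_p+V^*_q+1.$$
Dually, the hypothesis \eqref{suff3} is exactly the statement $f(p)+f(q)>-1$ on the restricted range $1\leq p,q\leq m-1$. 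So the whole proof reduces to propagating this bound from $\{1,\dots,m-1\}$ to all of $\Nz$.

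The propagation is the only substantive step and I would carry it out in two moves. First, choosing $p=q$ in \eqref{suff3} yields $f(p)>-1/2$ for each $1\leq p\leq m-1$. Second, for an arbitrary $p\geq 1$, I would write $p=am+r$ with $0\leq r\leq m-1$ and combine superadditivity with (iii) to obtain
$$f(p)\geq f(am)+f(r)=f(r).$$
If $r=0$ then (iii) alone gives $f(p)=0>-1/2$; if $r\geq 1$ then the first move gives $f(p)\geq f(r)>-1/2$. In either case $f(p)>-1/2$ for every $p\geq 1$, hence $f(p)+f(q)>-1$ for all $p,q\in\Nz$, and \eqref{suff} follows at every $n$.

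The one genuinely nontrivial point is this propagation; everything else is bookkeeping. The rationality assumption $m\alpha=1$ enters essentially only through property (iii): it is what permits a bound on short crystals of length at most $m-1$ to be tiled to give the energetics of arbitrarily long ones.
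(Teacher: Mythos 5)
Your proof is correct and follows essentially the same route as the paper's: the paper likewise reduces an arbitrary pair to its residues mod $m$ using \eqref{eq:exact} and \eqref{eq:basic}, only phrased as a proof by contradiction rather than through the normalized excess $f(p)=V_p^*-pV_m^*/m$. Your diagonal observation that \eqref{suff3} with $p=q$ already yields $f(p)>-1/2$ for $1\le p\le m-1$ is a clean packaging, but the underlying mechanism is identical.
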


\begin{proof}
 Let us check that \eqref{suff}  holds  for any given $n\in \Nz$.
 Assume by contradiction
  that  this is not the case, namely that there exist  $n', \,h, \, \ell \in  \Nz$ and $ i ,\, j \in \{ 0, \dots,
  m-1\}$ with $(h+\ell)m +i + j
  =n'\leq n$ such that
\begin{equation}\label{p1}
V_{hm+i}^* + V_{\ell m +j}^* + 1 \leq V_{n'}^*.
\end{equation}
 We first observe that
\begin{align*}
m V_{n'}^*   \stackrel{\eqref{eq:basic}}{\le} V^*_{mn'} = V_{m(h+\ell)m +m i + m j}^* \stackrel{\eqref{eq:exact}}{=} m V_{hm}^*+ mV_{\ell m }^* + i V_m^* + jV_m^*. 
\end{align*}
For the following, it is convenient to set $V_0^* = 0$.  The previous inequality along with \eqref{p1} (multiplied by $m$) then yields
\begin{align*}
m \big(V_{hm}^* +  V_i^* +  V_{\ell m}^* + V_j^* +1 \big)&  \stackrel{\eqref{eq:basic}}{\le}   m \big( V_{hm+i}^* + V_{\ell m +j}^* + 1 \big) \le m V_{n'}^* \\
& \le m V_{hm}^*+ mV_{\ell m }^* + i V_m^* + jV_m^*
\end{align*}
and therefore 
\begin{equation}\label{p1-new}
V_i^* +  V_j^* + 1 \le (i+j) V_m^*/m. 
\end{equation}
Note that, if $i$ or $j$ equal zero, they can be replaced by $1$ and \eqref{p1-new} still holds as $V_0^* = V_1^* = 0$. Then, \eqref{p1-new} contradicts \eqref{suff3}.  One hence concludes that the key condition
\eqref{suff} holds  for every $n \in \Nz$,  and the assertion follows from Proposition
\ref{thm:key}. 
\end{proof}

%
%


  We close this section by presenting a hierarchy of sufficient conditions entailing
\eqref{suff3}. Let us start by considering the condition
\begin{equation}\label{suff3-old}
V^*_{sm} < V^*_p + V^*_q +  V^*_r + 1\quad \forall \, p,  q , r\ \
\text{with}  \ \ p+q+r =sm, \ s=1,2
\end{equation}
which means that the splitting of an $m$-crystal or an $2m$-crystal into three splitting parts is energetically not convenient. We check that \eqref{suff3-old} implies \eqref{suff3}. To this end, we preliminarily  note that
\begin{align}\label{eq: smaller is better}
V_r^*/r \le V_m^*/m \quad \forall \, 1 \le r \le m.\ \
\end{align}
In fact, there holds
$$r V_m^* \stackrel{\eqref{eq:exact}}{=} V_{rm}^* = V_{rm}(x_{rm}^*) = \sum_{i=0}^{m-1} V_r(x_{rm}^* + ir\alpha),$$
and therefore
$$V_r^* \le \min_{i \in \lbrace 0, \ldots, m-1 \rbrace} V_r(x_{rm}^* + ir\alpha) \le m^{-1} \sum_{i=0}^{m-1} V_r(x_{rm}^* + ir\alpha)= rV_m^*/m, $$
which implies \eqref{eq: smaller is better}. Suppose now that $1 \le p,q \le m-1$ are given. Choose $1 \le r \le m$ such that $p+q+r = sm$, $s\in \lbrace 1,2\rbrace$. Then, we get 
\begin{align*}
V^*_p + V^*_q  + 1 \stackrel{\eqref{suff3-old}}{>} V^*_{sm}- V^*_r \stackrel{\eqref{eq: smaller is better}}{\ge}  V^*_{sm} -    rV_m^*/m 
\stackrel{\eqref{eq:exact}}{=}  s V^*_{m} -    rV_m^*/m = (p+q) V_m^*/m. 
\end{align*}
This shows \eqref{suff3}.

We now consider the following stronger albeit localized 
version of condition~\eqref{suff}
\begin{equation}\label{suff4}
V^*_{u + v} < V^*_{u} + V^*_v + 1/2\quad \forall \, u,v \ \
\text{with}  \ \  u+v \leq 2m. 
\end{equation}
Condition \eqref{suff3-old}  can be deduced from the latter by subsequently
splitting the $sm$-crystal into a $(p+q)$-crystal and  an  $r$-crystal and then
splitting the $(p+q)$-crystal into a $p$-crystal and a $q$-crystal. 

 We record now different sufficient conditions entailing \eqref{suff4}. Let us start by considering a function $v_1$ with Lipschitz constant  
\begin{equation}
  \label{lip}
  \Lip v_1 < \alpha/2.
\end{equation}
Recall the definition of {\it oscillation}  of  a function
$f: \Rz \to \Rz$  as  $\osc f = \sup_{x,y} |f(x)-f(y)|$.  As
$v_1$ is nonnegative and $\min v_1=0$, we readily have that $\osc v_i = \sup v_1 $. 
 Since $|x-y|_{\rm mod \, 1}\leq 1/2$ for all $x,\, y \in
\Rz$, we find  $\osc v_1 \leq   \Lip v_1/2$.  We then get by
\eqref{lip}  
\begin{equation}
  \label{osc}
  \osc v_1 <\alpha/4.
\end{equation}
 Now, by \eqref{osc}  and $m\alpha=1$   we obtain 
  \begin{equation}
  \label{oscm}
\osc V_{p} \le  \sum_{j=0}^{p-1} \osc v_1 (\cdot + j
\alpha) \leq 2m \,\osc v_1  < 2m\alpha/4 = 1/2 \quad \forall \, p\le 2m.
\end{equation}
Let now $u+v\leq 2m$. We have that 
  \begin{align*}
    V^*_{u+v} &   \le  V_{u+v}(x^*_u)=V_u(x^*_u)+V_v(x^*_u+u\alpha) = V^*_u + V_v(x^*_u+u\alpha) \nonumber\\
&= V^*_u + V^*_v + V_v(x^*_u+u\alpha)- V_v(x^*_v)= V^*_u + V^*_v + r_{u,v}
  \end{align*}
where 
$$r_{u,v} = V_v(x^*_u+ u\alpha)- V_v(x^*_v)\leq  \osc V_v.  $$ 
In particular, \eqref{oscm} implies
\begin{equation}
  \label{r}
  r_{u,v} <1/2\ \ \forall \, u,v\ \ \text{with} \ \ u+v\leq 2m
\end{equation}
which in turn entails \eqref{suff4}. We have hence
proved the following.
\begin{proposition}[Sufficient conditions]\label{thm:hi}
\begin{equation*}
\eqref{lip} \ \Rightarrow \ \eqref{osc} \ \Rightarrow \
\eqref{oscm} \ \Rightarrow \ \eqref{r} \ \Rightarrow \
\eqref{suff4}  \ \Rightarrow \ {    \eqref{suff3-old}    \  \Rightarrow   } \ \eqref{suff3}. 
\end{equation*}
\end{proposition}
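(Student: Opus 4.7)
The plan is to verify the six implications in sequence; essentially all ingredients are already developed in the paragraphs preceding the statement, so the proof of the proposition amounts to collecting those computations plus checking the two combinatorial passages at the end of the chain. For the opening block $\eqref{lip}\Rightarrow\eqref{osc}\Rightarrow\eqref{oscm}\Rightarrow\eqref{r}\Rightarrow\eqref{suff4}$, I would proceed as follows. The step $\eqref{lip}\Rightarrow\eqref{osc}$ uses $1$-periodicity of $v_1$ to reduce $\osc v_1$ to the supremum of $|v_1(x)-v_1(y)|$ over pairs with $|x-y|\le 1/2$, which is then bounded by $(\Lip v_1)/2 < \alpha/4$. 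The step $\eqref{osc}\Rightarrow\eqref{oscm}$ follows from subadditivity of oscillation along the $p$ translates $v_1(\cdot + j\alpha)$, each of which shares the oscillation of $v_1$, combined with $p\le 2m$ and $m\alpha=1$. The step $\eqref{oscm}\Rightarrow\eqref{r}$ is immediate, since $r_{u,v}$ is a difference of two values of $V_v$ with $u+v\le 2m$. Finally, $\eqref{r}\Rightarrow\eqref{suff4}$ is obtained by placing a $(u+v)$-crystal with leftmost particle at $x_u^*$ and splitting its landscape energy as $V_{u+v}(x_u^*) = V_u^* + V_v(x_u^*+u\alpha) \le V_u^* + V_v^* + r_{u,v} < V_u^* + V_v^* + 1/2$.

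For $\eqref{suff4}\Rightarrow\eqref{suff3-old}$, given a decomposition $p+q+r=sm$ with $s\in\{1,2\}$, both $p+q$ and $(p+q)+r=sm$ are at most $2m$, so I apply \eqref{suff4} first to the pair $((p+q),r)$ to obtain $V^*_{sm} < V^*_{p+q} + V^*_r + 1/2$, and then to the pair $(p,q)$ to obtain $V^*_{p+q} < V^*_p + V^*_q + 1/2$; summing yields the slack $1$ required in \eqref{suff3-old}.

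The last implication $\eqref{suff3-old}\Rightarrow\eqref{suff3}$ is the only one requiring a genuine idea, and is already spelled out in the excerpt. The key ingredient is the averaging inequality $V^*_r/r \le V^*_m/m$ for $1\le r\le m$, obtained from $rV^*_m = V^*_{rm} = \sum_{i=0}^{m-1} V_r(x^*_{rm} + ir\alpha)$, which expresses the energy of an $rm$-crystal as a sum of $m$ cyclically shifted $r$-crystals, so the minimum among the $m$ shifts is at most their average. Given $1\le p,q\le m-1$, one chooses the unique $1\le r\le m$ with $p+q+r=sm$ for some $s\in\{1,2\}$ (namely $r=m-(p+q)$, $s=1$ if $p+q<m$; $r=m$, $s=2$ if $p+q=m$; $r=2m-(p+q)$, $s=2$ otherwise); then \eqref{suff3-old} combined with the averaging inequality and $V^*_{sm}=sV^*_m$ yields $V^*_p + V^*_q + 1 > V^*_{sm} - V^*_r \ge sV^*_m - rV^*_m/m = (p+q)V^*_m/m$, which is exactly \eqref{suff3}. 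No step in the chain constitutes a real obstacle: the only nontrivial move is precisely the averaging argument by cyclic shifts just described, while all other steps are elementary inequalities or bookkeeping on crystal placements.
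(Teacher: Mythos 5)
Your proof is correct and follows essentially the same route as the paper: the first four implications are the same oscillation/Lipschitz bookkeeping, $\eqref{suff4}\Rightarrow\eqref{suff3-old}$ is the same two-stage splitting with slack $1/2+1/2$, and $\eqref{suff3-old}\Rightarrow\eqref{suff3}$ uses the identical averaging inequality $V_r^*/r\le V_m^*/m$ obtained from decomposing an $rm$-crystal into $m$ shifted $r$-crystals. No discrepancies to report.
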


Note that all implications  in Proposition \ref{thm:hi} cannot be
reversed: 

As for \eqref{osc} $\not \Rightarrow$ \eqref{lip}, the choice
$v_1(x) = (\alpha /9)\sin (3\pi x)$ gives $\osc v_1 =
2\alpha/9<\alpha/4$ but $\Lip v_1 = 3\pi \alpha/9 >\alpha /2$. 

As for \eqref{oscm} $\not \Rightarrow$ \eqref{osc}, one can consider 
$ v_1(x) = h(1 - xm)^+ $  for $\alpha/4 \le h < 1/4$. Then  $\osc v_1
= h$  and  $\osc V_{p} \le  2h$ for all $p \le 2m$.    

As for \eqref{r} $\not \Rightarrow$ \eqref{oscm}, one takes  the sawtooth function
\begin{equation*}
v_1(x)=  \frac{2h}{\alpha}  \max_{j=0,\dots,m-1}\left(\frac{\alpha}{2}- \left|x+ j\alpha -\frac{\alpha}{2}\right| \right)
\end{equation*}
for  $h \ge 1/(4m)$  and check that $r_{u,v} = 0$ whereas $\osc V_{2m} = 2mh
\ge   1/2  $.

As for \eqref{suff4} $\not \Rightarrow$ \eqref{r},  take $\alpha = 1/2$ (i.e., $m=2$),  let
$v_1$ be locally minimized at $0, \, 1/4$, and $1/2$ with
$v_1(0)=v_1(1/2)=\varepsilon <  1/6  $, $v_1(1/4)=0$, and $v_1=h>1/2$ out
of a very
small neighborhood of $\{0, \, 1/4,\,1/2\}$.  Note that $V_1^* = 0$ and $V_p^* = \varepsilon p$ for $p \ge 2$.    Then,  \eqref{suff4}
can be readily  checked  as $V^*_{u+v}\leq (u+v)\varepsilon \leq V^*_u +
V^*_v +  2\varepsilon  < V^*_u +
V^*_v +1/2$. On the other hand, $V^*_1=0$ and $x_1^*=1/4$, so that $r_{1,1} =
V_1(x_1^*+\alpha) - V_1(x_1^*) = h - 0>1/2$.

As for  \eqref{suff3-old} $\not \Rightarrow$ \eqref{suff4}, we  take  $m=2$,  and  take 
$v_1(0)=0$ and $v_1=h$ out of a very small
neighborhood of $\lbrace 0 \rbrace$, for any $h \in (1/2,1)$. One can
easily check that $V^*_i = h\lfloor i /2 \rfloor$ for $i \in
\Nz$. Then, condition \eqref{suff3-old} holds since $V_4^* = 2h  <
h+1= V_2^* +V_1^* + V_1^*  + 1$. On the other hand, one has $V_2^* = h
> 1/2 =  V_1^* + V_1^* + 1/2$.

 As for   \eqref{suff3} $\not \Rightarrow$ \eqref{suff3-old}, we let $\alpha = 1/3$ (i.e., $m=3$), and  take 
$v_1(0)=0$ and $v_1=h$ out of a very small
neighborhood of $\lbrace 0 \rbrace$, for any $h \in (1/2,3/4)$. One
can readily compute that $V_p^*=  h (p - \lceil p/3\rceil)$ for
all $p\in\Nz$.  Thus, we can check that for $1 \le p,q \le 2$
$$(p+q) \frac{V_3^*}{3}=  h( (p-1)+ (q-1))+ h \Big(2 - \frac{p+q}{3}\Big) \le V_p^* + V_q^*  + \frac{4h}{3} <  V_p^* + V_q^* + 1.$$
However, there holds $  V_3^*  = 2h > 1 = V_1^* + V_1^* + V_1^*  +1$.

 Let us conclude this discussion by remarking that condition 
\eqref{suff3} is  indeed  not necessary for
crystallization. More precisely, let us show that \eqref{suff}   $\not
\Rightarrow$ \eqref{suff3}. Consider the previous example for $h \in
(3/4,1)$.  Recall  that $V_p^* = h(p - \lceil p/3
\rceil)$ for all $p \in \Nz$. Therefore, \eqref{suff} can be 
checked  for all $n \in \Nz$ by using  the fact that $
\lceil x/3 \rceil +  \lceil y/3 \rceil \le  \lceil (x+y)/3 \rceil+1$
for all $x, y \in \Nz$ and $h < 1$. On the other hand, \eqref{suff3}
is not satisfied since $(1+1) V_m^*/m  =  4h/3  >  1 = V_1^* + V_1^*  +1$.

\section{Crystallization for irrational $\alpha$}\label{sec:irrational}
Assume now $\alpha$ to be irrational. 
Let us start by presenting 
a necessary condition  for crystallization. 

\begin{proposition}[Necessary condition for crystallization]\label{prop:nec}
 If $ \int_0^1 v_1(t)\, \d t \geq 1$, one has no $n$-crystallization for
 $n$ large enough.
\end{proposition}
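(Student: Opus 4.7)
My plan is to couple Weyl's equidistribution theorem with the \emph{singleton} splitting implicit in Proposition~\ref{thm:key}: placing $n$ particles at minimizers of $v_1$ at mutual distance greater than $\alpha$ produces a non-crystalline, non-interacting configuration of zero energy, so $n$-crystallization forces the $n$-crystal energy $V_n^*-(n-1)$ to be strictly negative. I will then upgrade this to a contradiction with $I:=\int_0^1 v_1(t)\,\d t\ge 1$ via an asymptotic analysis of $V_n^*/n$.

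\textbf{Equidistribution.} Since $\alpha$ is irrational and $v_1$ is $1$-periodic, bounded, and piecewise continuous (hence Riemann integrable and of bounded variation), Weyl's theorem combined with Koksma's inequality yields $\sup_{x\in[0,1)}|V_n(x)/n-I|\to 0$, which gives $V_n^*/n\to I$. An equivalent route: by the superadditivity \eqref{eq:basic} and Fekete's lemma $V_n^*/n\to L:=\sup_n V_n^*/n$; averaging $\int_0^1 V_n(x)\,\d x=nI$ yields $L\le I$, while the uniform equidistribution gives $L=I$.

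\textbf{Main step.} Arguing by contradiction, suppose $n_k$-crystallization holds along a strictly increasing sequence $n_k\to\infty$. The singleton construction above is not an $n_k$-crystal for $n_k\ge 2$, so uniqueness of the crystalline ground state forces the strict inequality $V_{n_k}^*-(n_k-1)<0$, i.e.\ $V_{n_k}^*/n_k<1-1/n_k$. Letting $k\to\infty$ and using the previous step, $I\le 1$; combined with the hypothesis $I\ge 1$ this gives $I=1$, and in the strict regime $I>1$ the contradiction is immediate.

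\textbf{Main obstacle.} The delicate case is $I=1$, in which the conclusion $V_{n_k}^*/n_k\to 1$ is consistent with the strict bound $V_{n_k}^*/n_k<1-1/n_k$. To handle it I would reinforce the above with the full $2$-part necessary condition $V_{n_k}^*<V_p^*+V_{n_k-p}^*+1$: in terms of the subadditive defect $c_n:=nI-V_n^*\ge 0$ this reads $c_p+c_{n_k-p}-c_{n_k}<1$ for all $p+q=n_k$. Since $c_n$ is subadditive with $c_n/n\to 0$, a quantitative discrepancy estimate along a carefully chosen sequence of splittings (in the spirit of the arguments developed in Section~\ref{sec:irrational}) should force one of these gaps to exceed $1$ for large $k$, yielding the desired contradiction. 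This quantitative step is where I expect the main technical effort to lie.
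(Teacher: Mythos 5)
Your proposal follows the same route as the paper's proof: equidistribution of $(x+j\alpha)_{\mo}$ gives $V_n^*/n\to I:=\int_0^1 v_1(t)\,\d t$, and comparison with the zero-energy configuration of $n$ isolated particles sitting at minimizers of $v_1$ (mutual distances $\ge 1>\alpha$, so no interaction and no landscape cost) shows that $n$-crystallization forces $V_n^*<n-1$. For $I>1$ this closes the argument exactly as in the paper. Of your two routes to $V_n^*/n\to I$, the Fekete-plus-averaging one is clean (it also gives the useful bound $V_n^*\le nI$); the Koksma remark is slightly off, since piecewise continuity does not imply bounded variation, but Weyl's theorem for Riemann-integrable $1$-periodic functions suffices.

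The case $I=1$ is a genuine gap in your write-up, exactly where you flag it. There you only obtain $V_{n_k}^*/n_k\to 1$, which is perfectly compatible with $V_{n_k}^*<n_k-1$ for every $k$: the defect $c_n=n-V_n^*$ is merely known to be nonnegative, subadditive, and $o(n)$, and it can grow like $\log n$ — indeed the constructions of Section~\ref{sec:gg} produce potentials with $V_{n_k}^*\le n_k\int_0^1 v_1 - C\log n_k$ along a subsequence. So the singleton splitting alone cannot decide the case $I=1$, and your proposed repair via the two-part condition $c_p+c_q-c_{p+q}<1$ is only a plan: nothing in subadditivity plus $c_n/n\to0$ forces such a defect to reach $1$, and the discrepancy lower bounds you would want to invoke hold only along subsequences and for specially built $v_1$. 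You should know, however, that the paper's own proof is the one-line assertion that $\int_0^1 v_1\ge 1$ implies $V_n^*>n-1$ for $n$ large, i.e.\ for $I=1$ it makes precisely the inference you (rightly) decline to make, and offers no further idea. So your proposal is complete and matches the paper for $I>1$, while the unresolved boundary case is a soft spot you share with — and have correctly diagnosed in — the original argument.
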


\begin{proof}
The map $x \in \Tz:=\Rz/\Zz\mapsto \alpha x\in \Tz$ is ergodic. Hence, for all $x \in  [0,1)$ we have that 
\begin{equation}
\frac{1}{n}V_n(x) =\frac{1}{n} \sum_{j=0}^{n-1}v_1(x+j\alpha)\to
\int_0^1 v_1(t)\, \d t.\label{ergo}
\end{equation}
In case $ \int_0^1 v_1(t)\, \d t \geq 1$, one has that $V_n^* > n-1$ for $n$ large enough. The statement follows
because splitting an $n$-crystal into $n$ isolated particles lowers the energy.
\end{proof}

This already shows the fundamental distinction
between the rational and the irrational. An illustration of this
difference can be obtained by fixing $m\in \Nz$ and assuming
to be given $v_1$ such that $v_1(i/m)=0$ for $i=0,\dots,m-1$ and
$\int_0^1v_1(t)\, \d t\geq 1$. Then, for all $\alpha$ such
that $m\alpha \in \Nz$ we  obtain  crystallization.  On the
other hand, if  $\alpha$ is irrational, 
there is no $n$-crystallization for $n$  large enough. 

In case the landscape potential $v_1$ exhibits some quantitative
convergence rate in \eqref{ergo}, one can deduce 
$n$-crystallization under  the  condition 
\begin{equation}
   \left| \frac1rV_r^* - \int_0^1 v_1(t)\, \d t\right| <
  \frac{1}{3r} \quad \forall \,  r \le n.  \label{ergoq}
\end{equation}
 Indeed, we have  the following result.

\begin{proposition}[$n$-crystallization]\label{eq: irrat-thm} Under \eqref{ergoq},
  $n$-crystallization holds.
\end{proposition}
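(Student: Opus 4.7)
The plan is to reduce the claim to the key condition \eqref{suff} via Proposition~\ref{thm:key}, and then read \eqref{suff} off from \eqref{ergoq} essentially by the triangle inequality. More precisely, set $I := \int_0^1 v_1(t)\, \d t$. Rewriting \eqref{ergoq} as the two-sided bound
$$rI - \tfrac{1}{3} < V_r^* < rI + \tfrac{1}{3}\qquad \forall\, r\leq n,$$
I would first apply the upper estimate at $r=p+q$ and the lower estimates at $r=p$ and $r=q$ (all of which are admissible since $p,q\geq 1$ and $p+q\leq n$).

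Adding these, I obtain
$$V_p^* + V_q^* + 1 \;>\; (pI - \tfrac{1}{3}) + (qI - \tfrac{1}{3}) + 1 \;=\; (p+q)I + \tfrac{1}{3} \;>\; V_{p+q}^*,$$
which is exactly condition \eqref{suff} at scale $n$. The strict inequality is preserved because \eqref{ergoq} is itself strict, and the gap $1$ comfortably absorbs the three $1/3$ deficits coming from each application of the ergodic estimate. In fact, the constant $1/3$ in \eqref{ergoq} is tailored precisely so that $3\cdot(1/3)=1$ matches the depth $|\min v_2|=1$ of the hard-sphere well.

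With \eqref{suff} verified for all $p,q$ with $p+q\leq n$, Proposition~\ref{thm:key} immediately yields $n$-crystallization, completing the proof. There is no real obstacle here: the entire content is to track how the quantitative rate \eqref{ergoq} compares the per-particle minimal landscape energy with the spatial average $I$, and to check that the resulting additive error is below the binding energy. The only point requiring slight care is the admissibility of the indices $p$, $q$, and $p+q$ in \eqref{ergoq}; since we only ever need to test \eqref{suff} up to $p+q\leq n$, all three invocations fall within the hypothesis, and no further argument is needed.
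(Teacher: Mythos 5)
Your argument is correct and is essentially the paper's own proof: both multiply \eqref{ergoq} by $r$ to get $|V_r^*-rI|<1/3$, apply it at $r=p$, $q$, and $p+q$, sum the three errors to stay strictly below $1$, and conclude via the key condition \eqref{suff} and Proposition~\ref{thm:key}. No issues.
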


\begin{proof}
For all $r\le n$, define the {\it error}
$$  e(r):= \left| V_r^* - r\int_0^1 v_1(t)\, \d t\right|.$$
Condition \eqref{ergoq} entails that $e(r)<1/3$. For all $p,\,q
\leq n $ we can
hence compute that 
  \begin{align*}
  &  V_{p+q}^*- V^*_p-V^*_q \leq (p+q)\int_0^1 v_1(t)\, \d t - p\int_0^1 v_1(t)\, \d t -
  q\int_0^1 v_1(t)\, \d t \\
&+e(p+q) + e(p)+e(q) <1  \quad \forall \,
  p, q \ \ \text{with} \ \ 
  p+q\leq n.  
   \end{align*}
In particular, the key condition \eqref{suff} holds and
$n$-crystallization follows.
\end{proof}

Note that the positive statement of Proposition \ref{eq:
  irrat-thm} is compatible with the negative assertion of
Proposition \ref{prop:nec}: by assuming \eqref{ergoq} for some $n$ and choosing
$r=1\leq n$, one has that $\int_0^1 v_1(t)\, \d
t<1/3$. 
The following  proposition  yields a sufficient condition for \eqref{ergoq}.

\begin{proposition}[Control via oscillation]\label{thm:4.3}
For all $n \in \Nz$, there holds
$$\left| \frac1nV_n^* - \int_0^1 v_1(t)\, \d t\right| \le
  \frac{\osc V_n}{n}.  $$
\end{proposition}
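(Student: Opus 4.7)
The plan is to compare $V_n^*$ with $n \int_0^1 v_1$ using the $1$-periodicity of $v_1$. The key identity is that for every $j \in \Nz$, the substitution $t = x + j\alpha$ and the $1$-periodicity of $v_1$ give $\int_0^1 v_1(x+j\alpha)\, \d x = \int_0^1 v_1(t)\, \d t$. Summing over $j = 0, \ldots, n-1$ yields
\[
\int_0^1 V_n(x)\, \d x = n \int_0^1 v_1(t)\, \d t.
\]

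With this identity in hand, I would sandwich the mean of $V_n$ between $V_n^*$ and $V_n^* + \osc V_n$. The lower bound is immediate: $V_n(x) \ge V_n^*$ for all $x \in [0,1)$, so integrating gives $n \int_0^1 v_1(t)\, \d t \ge V_n^*$. For the upper bound, by definition of oscillation we have $V_n(x) \le V_n^* + \osc V_n$ for all $x$, and integrating over $[0,1)$ gives $n \int_0^1 v_1(t)\, \d t \le V_n^* + \osc V_n$.

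Combining the two inequalities and dividing by $n$ yields
\[
0 \le \int_0^1 v_1(t)\, \d t - \frac{V_n^*}{n} \le \frac{\osc V_n}{n},
\]
which is even sharper than the claimed absolute-value bound (the sign of the error is actually fixed, since $V_n^*/n$ underestimates the mean). There is no genuine obstacle here; the only thing to watch is that the substitution argument uses the $1$-periodicity of $v_1$ and not any property of $\alpha$, so the estimate holds uniformly in whether $\alpha$ is rational or irrational.
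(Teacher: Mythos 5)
Your proof is correct and follows essentially the same route as the paper's: the periodicity identity $\int_0^1 V_n(x)\,\d x = n\int_0^1 v_1(t)\,\d t$ combined with the oscillation bound on $V_n - V_n^*$. The only (minor) added value is your observation that the error has a definite sign, i.e.\ $V_n^*/n$ always underestimates the mean, which is a slightly sharper conclusion than the stated absolute-value bound.
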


\begin{proof}
Since $v_1$ is $1$-periodic, we obtain
\begin{align*}
\int_0^1 v_1(t) \, {\rm d}t = \frac{1}{n}\sum_{j=0}^{n-1} \int_0^1 v_1(t + j\alpha) \, {\rm d}t = \frac{1}{n}\int_0^1 V_n(x) \, {\rm d}x.
\end{align*}
We also observe that 
$$\left| \int_0^1 V_n(x) \, {\rm d}x - V_n^*\right| \le \osc V_n. $$
The result follows by combining the two estimates.
\end{proof}

By combining Propositions \ref{eq: irrat-thm} and \ref{thm:4.3}
one gets $n$-crystallization if  $\osc V_r < 1/3$ holds for all
$r \le n$. In view of computation \eqref{oscm}, this is  in
particular   satisfied if $\osc v_1 < 1/(3n)$. 
The fundamental
difference with respect to the result in  Proposition 
\ref{thm:hi} consists in the fact that the bound on the oscillation
depends on the number of particles $n$,  and is violated for all $n$ large enough.  In the following, we seek for
conditions entailing $n$-crystallization  \emph{for all} $n$. We present a result in this direction by focusing on a special
family of piecewise constant functions  $v_1$  which are constant on
intervals of very specific length.

\begin{theorem}[Crystallization for a special piecewise constant $v_1$]\label{thm: positive}
Let $\alpha \in (0,1)$ be  irrational. Let $h >0$. For each $\varepsilon \in (0,1)$, there exists $\gamma \in (0,\varepsilon]$  such that for each open interval $I \subset (0,1)$ with $|I|=\gamma$  the $1$-periodic function $v_1$ defined by  $v_1(t) = h\chi_I(t)$ for $t \in [0,1)$ satisfies
$$ \left| \frac1nV_n(x) - \int_0^1 v_1(t)\, \d t\right| { \le }
  \frac{C_\varepsilon h}{n} \quad \forall x \in [0,1) \quad \forall \, n \in \Nz, $$
  where $C_\varepsilon$ depends on $\alpha$ and $\varepsilon$, but is
  independent of $h$ and $n$.  In particular,   by Proposition \emph{\ref{eq: irrat-thm}} this implies that    crystallization holds  if  $h < 1/(3C_\varepsilon)$. 
  \end{theorem}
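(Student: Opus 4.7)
The strategy is to choose the interval length $\gamma$ in a specific arithmetic form dictated by the Diophantine structure of $\alpha$, so that the orbit of $x$ under the rotation by $\alpha$ visits $I$ with discrepancy bounded uniformly in $n$ and $x$. First, by Dirichlet's approximation theorem, for each $\varepsilon \in (0,1)$ there is a positive integer $q = q(\alpha,\varepsilon)$ with $\min_{p \in \Zz}|q\alpha - p| \le \varepsilon$. I set $\gamma$ equal to this distance and, replacing $q$ by $-q$ if necessary (allowing $q<0$), write $\gamma = q\alpha - p$ for some $p \in \Zz$. Since $v_1$ is $1$-periodic and the discrepancy we aim to bound is translation-covariant in $x$, I may assume without loss of generality that $I = [0,\gamma)$; switching between the open and half-open interval alters the count by at most two because $\alpha$ is irrational.

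The core computation rests on the pointwise identity
$$\chi_{[0,\gamma)}((y)\mo) = \lfloor y \rfloor - \lfloor y - \gamma \rfloor \quad \text{for all } y \in \Rz, $$
which is elementary to verify by separating the cases $\{y\} < \gamma$ and $\{y\} \ge \gamma$. Substituting $y = x + j\alpha$ and using $\gamma = q\alpha - p$ yields
$$\chi_I((x+j\alpha)\mo) = \lfloor x+j\alpha \rfloor - \lfloor x + (j-q)\alpha \rfloor - p, $$
so summing over $j = 0, \ldots, n-1$ produces a telescoping in the shifted index: all floor terms cancel except those with index in $\{-|q|,\ldots,-1\} \cup \{n-|q|,\ldots,n-1\}$. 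Writing $\lfloor y \rfloor = y - \{y\}$, the affine parts of the surviving $2|q|$ boundary terms collect precisely to $n(q\alpha - p) = n\gamma$, and what remains is the difference of two sums of $|q|$ fractional parts (each summand in $[0,1)$). This delivers
$$\left|\sum_{j=0}^{n-1}\chi_I((x+j\alpha)\mo) - n\gamma\right| \le |q| $$
uniformly in $x \in [0,1)$ and $n \in \Nz$ (the corner case $n < |q|$ reduces to the same bookkeeping). Multiplying by $h/n$ gives the claim with $C_\varepsilon = |q|$.

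The final crystallization assertion then follows at once from Proposition~\ref{eq: irrat-thm}: since $\int_0^1 v_1 = h\gamma$, the derived bound ensures that hypothesis~\eqref{ergoq} holds for every $n$ whenever $C_\varepsilon h < 1/3$, that is, whenever $h < 1/(3C_\varepsilon)$. The only delicate point of the plan is the choice of $\gamma$: its arithmetic form $\gamma = q\alpha - p$ is precisely what reduces $\chi_I - \gamma$ to a coboundary for the rotation by $\alpha$ and thereby enables the telescoping. A generic $\gamma$ would instead yield only a logarithmic discrepancy, reflecting Kesten's classical characterization of bounded remainder sets for irrational rotations.
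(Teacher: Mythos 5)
Your proof is correct, and it hinges on exactly the same key idea as the paper's: the interval length $\gamma$ must be chosen in $\alpha\Zz+\Zz$, i.e.\ as $\gamma=q\alpha-p$ (the paper's Lemma~\ref{lemma: good interval length} takes $\gamma=m\alpha\mo$ for the smallest admissible $m$, which is the same choice), so that $I$ becomes a bounded remainder set for the rotation. Where you diverge is in how the $O(1/n)$ discrepancy is then extracted. The paper argues combinatorially: it partitions the orbit $\{j\alpha\}_{j<n}$ into the $m$ arithmetic subprogressions $J_n^i$ with gap $\gamma$, counts exactly $k$ hits of $\Nz+I$ per subprogression, and controls the leftover points, arriving at $C_\varepsilon=2m(1+1/\gamma)$. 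You instead use the coboundary identity $\chi_{[0,\gamma)}((y)\mo)=\lfloor y\rfloor-\lfloor y-\gamma\rfloor$ and telescope, which is cleaner, handles all $x$ and all translates of $I$ uniformly in one stroke, and yields the sharper constant $C_\varepsilon=|q|$ (plus $2$ to pass from half-open to open intervals, since each endpoint is hit at most once by the injective orbit). The only caveats, both of which you address adequately, are the corner case $n<|q|$ and the open-versus-half-open adjustment; the final deduction of crystallization from Proposition~\ref{eq: irrat-thm} via \eqref{ergoq} is identical to the paper's. In short: same key lemma and same choice of $\gamma$, but a genuinely different and arguably tighter proof of the discrepancy bound.
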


It is  a standard matter to check that the above assertion holds
for  functions  $v_1$ resulting from linearly combining indicator
functions  of the type described in the statement of Theorem
\ref{thm: positive}.  More precisely, crystallization holds for
landscape potentials of the form 
$$v_1(x) = \sum_{j=1}^N h_j \chi_{ (0,\gamma_j)+\Nz}(x+x_j)$$
for any $x_1, \dots, x_N \in \Rz$ and any $h_1, \dots, h_N \in \Rz$
with  $\sum_{j=1}^N C_{\varepsilon_j}|h_j| < 1/3$,  where $\gamma_j \in
(0,\varepsilon_j]$ and $C_{\varepsilon_j}$ are given from Theorem
\ref{thm: positive},  for some $\varepsilon_1,\dots, \varepsilon_N\in (0,1)$. 

In order to prove 
Theorem \ref{thm: positive} we  need a technical lemma. 
Given an interval $I \subset (0,1)$, the {\it discrepancy} of the
sequence $\{(j\alpha)_{\mo}\}_{j\in \Nz}$ with respect to the
interval $I$ is
defined as  
\begin{align}\label{eq:phi}
\phi_n(I) =  \frac{1}{n}\sum_{j=0}^{n-1} \chi_{\Nz + I}(j\alpha) - |I|. 
\end{align}
 In the following, if not specified, $I$ may be open, half-open, or closed.

\begin{lemma}[Discrepancy control]\label{lemma: good interval length}
Let $\alpha \in (0,1)$ be irrational. For each $\varepsilon \in (0,1)$, there exists $\gamma \in (0,\varepsilon]$  such that each interval $I \subset \Rz$ with $|I|=\gamma$ satisfies 
\begin{equation}
  |  \phi_n(I)   |  \le \frac{C_\varepsilon}{n} \quad \forall \, n\in
\Nz, \label{17star}
\end{equation}
where $C_\varepsilon$ depends on $\alpha$ and $\varepsilon$. 
\end{lemma}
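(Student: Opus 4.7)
The plan is to select $\gamma$ as the fractional part of $k\alpha$ for a suitable positive integer $k$, so that $\gamma$ lies in the lattice $\alpha\Zz+\Zz$, and then to exploit the classical fact that intervals of such lengths are \emph{bounded remainder sets} for the irrational rotation $T_\alpha\colon x\mapsto (x+\alpha)_{\rm mod\,1}$ on $\Tz=\Rz/\Zz$. Since $\alpha$ is irrational, Weyl's equidistribution theorem ensures that $((k\alpha)_{\rm mod\,1})_{k\in\Nz}$ is dense in $[0,1)$; hence one may fix $k=k(\alpha,\varepsilon)\in\Nz$ with $\gamma:=(k\alpha)_{\rm mod\,1}\in(0,\varepsilon]$ and set $p:=\lfloor k\alpha\rfloor$, so that $\gamma=k\alpha-p$. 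The integer $k$ depends only on $\alpha$ and $\varepsilon$, and will ultimately control the constant $C_\varepsilon$.

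By $1$-periodicity of $\chi_{\Nz+I}$, every interval $I\subset\Rz$ with $|I|=\gamma$ can be identified with an arc of length $\gamma$ in $\Tz$. The technical core is to establish the following \emph{cohomological identity}: for every arc $I=[a,a+\gamma)\subset\Tz$, there exists a bounded measurable function $g_I\colon\Tz\to\Rz$, with $\|g_I\|_\infty\le Ck$ for some constant $C$ independent of $I$ and $n$, such that
$$\chi_I(x)-\gamma=g_I(x)-g_I(T_\alpha x)$$
for every $x\in\Tz$ outside a finite set. A concrete candidate combines a rescaled periodic sawtooth $k\bigl((x-a)_{\rm mod\,1}-\tfrac12\bigr)$, which contributes a drift $-k\alpha$ to $g_I(x)-g_I(T_\alpha x)$ up to an integer jump, with a sum $\sum_{i=1}^{k}\chi_{J_i}(x)$ of indicators of suitable half-open arcs $J_i$ adapted to the orbit $a,a+\alpha,\ldots,a+k\alpha$. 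The precise form of the $J_i$ is chosen so that the sum telescopes, thanks to $\gamma=k\alpha-p$, into $\chi_I(x)-\gamma$ after cancellation with the sawtooth.

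Once the cohomological identity is in hand, iterating along the orbit $0,\alpha,\ldots,(n-1)\alpha$ yields the telescoping
$$n\phi_n(I)=\sum_{j=0}^{n-1}\bigl(\chi_I((j\alpha)_{\rm mod\,1})-\gamma\bigr)=g_I(0)-g_I\bigl((n\alpha)_{\rm mod\,1}\bigr),$$
so that $|n\phi_n(I)|\le 2\|g_I\|_\infty\le 2Ck=:C_\varepsilon$. Dividing by $n$ produces the claimed estimate, uniformly in the arc $I$ of length $\gamma$ and in $n\in\Nz$.

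The main obstacle is the explicit verification of the cohomological identity: one must carry out a careful case analysis according to whether $x$ or $T_\alpha x$ crosses the integer threshold (where the sawtooth jumps) and to the position of $x$ with respect to the finitely many arcs $J_i$; the relation $\gamma=k\alpha-p$ enters precisely to close the telescoping sum. A less explicit but arguably cleaner alternative avoids this bookkeeping by appealing to the Gottschalk--Hedlund theorem together with the Denjoy--Koksma inequality: choosing $k$ to be a denominator of a convergent of $\alpha$, the inequality yields $\bigl|\sum_{j=0}^{k-1}(\chi_I-\gamma)((x+j\alpha)_{\rm mod\,1})\bigr|\le V(\chi_I-\gamma)=2$ uniformly in $x$ and $I$; the existence of a bounded transfer function $g_I$, and hence the uniform bound on $n\phi_n(I)$, then follows from the Gottschalk--Hedlund criterion and extends to general $n$ by an Ostrowski-type decomposition.
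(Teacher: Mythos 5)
Your route is genuinely different from the paper's and, once the one missing identity is pinned down, it works. The paper makes the same initial choice ($\gamma$ is the fractional part of $m\alpha$ for the smallest admissible $m$) but then argues by direct counting: it splits the orbit $\{j\alpha\}_{j<n}$ into $m$ arithmetic progressions of common difference $\gamma$ modulo $1$, observes that a progression of step $\gamma$ meets an interval of length $\gamma$ exactly once per period, and bounds the leftover points by $m(1+1/\gamma)$, yielding $C_\varepsilon=2m(1+1/\gamma)$. You instead invoke the bounded-remainder-set/coboundary structure of intervals of length $\gamma\in\alpha\Zz+\Zz$ (Hecke's theorem). That is correct, and the ``main obstacle'' you flag is in fact a one-line computation: writing $\gamma=k\alpha-p$ and $\{y\}:=(y)_{\mo}$, the elementary identity $\{y\}-\{y-\gamma\}=\gamma-\chi_{[0,\gamma)}(\{y\})$ shows that
$$g_I(x):=\sum_{i=1}^{k}\{x-a-i\alpha\},\qquad I=a+[0,\gamma),$$
satisfies $g_I(x)-g_I(x+\alpha)=\chi_{\Nz+I}(x)-\gamma$ for \emph{every} $x$, with no exceptional set, and $\|g_I\|_\infty\le k$; telescoping then gives $|n\phi_n(I)|\le 2k$, a cleaner constant than the paper's. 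So the transfer function is just a sum of $k$ sawtooths, not the sawtooth-plus-indicators hybrid you sketch. Two cautions: your ``cleaner alternative'' via Gottschalk--Hedlund is circular here, since that theorem \emph{presupposes} bounded Birkhoff sums (and requires continuity, which $\chi_I-\gamma$ lacks), while Denjoy--Koksma plus an Ostrowski decomposition only yields $O(\log n)$ for intervals of generic length --- the $O(1)$ bound really does hinge on $|I|\in\alpha\Zz+\Zz$. Also, for open or closed intervals of the same length the identity changes only at finitely many points, each visited at most once by the orbit, so the constant degrades harmlessly, matching the paper's closing remark.
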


\begin{proof}
Fix $\varepsilon \in (0,1)$  and let $m\in \Nz$ be the smallest
integer  such that
\begin{align}\label{eq: I1}
\gamma: = m\alpha \mo \in (0,\varepsilon).  
 \end{align}
 Note that such $m$ exists uniquely since $\alpha$ is irrational. 
Consider the interval $I=x_0 + [0,\gamma)$ for some $x_0 \in \Rz$. By
choosing the constant $C_\varepsilon$ in the statement sufficiently
large,  we limit ourselves in proving \eqref{17star} for $n\ge m \lceil 1/\gamma \rceil$. Fix $k \in \Nz$ such that 
\begin{align}\label{eq: I5}
 m \lceil k/\gamma \rceil \le n  < m \lceil (k+1)/\gamma \rceil.
 \end{align}
Define  $J_n =\bigcup_{j=0}^{n-1} \lbrace j\alpha  \rbrace$ and the sets  
$$J_n^i :=   \bigcup_{\ell = 0}^{\lceil k/\gamma \rceil-1}  \lbrace  i\alpha + \ell \gamma \rbrace, \ \ \ \forall \, i=0,\ldots,m-1.$$
In view of \eqref{eq: I1} and $n \ge m \lceil k/\gamma \rceil$, we obtain 
\begin{align}\label{eq: I2}
 (J_n^i)_{\mo} \subset (J_n)_{ \mo}, \ \ \ \ \forall \, i=0,\ldots,m-1.
\end{align}
In a similar fashion, $n  < m \lceil (k+1)/\gamma \rceil$ implies
\begin{align}\label{eq: I4}
\# \left(J_n \setminus \bigcup_{i=0}^{m-1} J_n^i\right) \le m \lceil (k+1)/\gamma \rceil -  m \lceil k/\gamma \rceil \le  m(1+1/\gamma).
\end{align}
As $I = x_0+ [0,\gamma)$, it is not hard to see that
\begin{align}\label{eq: I3}
 k = \# \big(  (\Nz + I) \cap J_n^i           \big)  
\end{align}
since  $k$   consecutive intervals  in $\Nz + I$ contain exactly one element of $J_n^i$.
By \eqref{eq: I2} we get 
\begin{align*}
\sum_{i=0}^{m-1}  \# \big(  (\Nz + I) \cap J_n^i           \big) \le \sum_{j=0}^{n-1} \chi_{\Nz + I} (j\alpha )  \le   \sum_{i=0}^{m-1}  \# \big(  (\Nz + I) \cap J_n^i           \big) +  \# \left(J_n \setminus \bigcup_{i=0}^{m-1} J_n^i\right).
\end{align*}
This, along with  \eqref{eq: I4} and  \eqref{eq: I3}, shows
\begin{align*}
\left|\sum_{j=0}^{n-1} \chi_{\Nz + I} (j\alpha) - km  \right| \le  \# \left(J_n \setminus \bigcup_{i=0}^{m-1} J_n^i\right)\le m(1+1/\gamma).
\end{align*}
Since $|km/n - |I|| = |km/n - \gamma| \le m(1+\gamma)/n$ by
\eqref{eq: I5},  we estimate 
\begin{align*}
  |  \phi_n(I)   |  &\leq \left| \frac1n
  \sum_{j=0}^{n-1}\chi_{\Nz+I}(j\alpha) - \frac{km}{n}\right| +
  \left|\frac{km}{n}- |I|  \right|\\
&\leq m(1+1/\gamma)/n + m(1+\gamma)/n\leq 2m(1+1/\gamma)/n
\end{align*}
so that the statement follows for $C_\varepsilon =
2m(1+1/\gamma)$. We point
out that the  proof   can be easily adapted  for open or closed intervals of length $\gamma$ since the endpoints of the intervals appear at most once in the sequence $(j\alpha)_{j\in\Nz}$.    
\end{proof}

\begin{proof}[Proof of Theorem \emph{\ref{thm: positive}}]
Fix $\varepsilon>0$ and choose $\gamma \in (0,\varepsilon]$ as in
Lemma \ref{lemma: good interval length}. Define $I=x_0 + (0,\gamma)$
 with  $x_0 \in (0,1-\gamma)$ and $v_1(t) = h\chi_I(t)$ for $t \in [0,1)$.  
For $x \in [0,1)$, by applying Lemma \ref{lemma: good interval length}
 to  the interval $I-x$ and recalling \eqref{eq:phi}, we compute 
\begin{align*}
\left|\frac1nV_n(x) - h |I|\right| = \left|\frac{1}{n} \sum_{j=0}^{n-1}v_1(x+j\alpha) -h |I| \right| =   h\left|\frac{1}{n}\sum_{j=0}^{n-1} \chi_{\Nz + I-x} (j\alpha) -|I|\right| \le \frac{C_\varepsilon h}{n}.
\end{align*}
Since $\int_0^1 v_1(t) \, \d t = h |I|$, the statement follows.  
\end{proof}

\section{Generic noncrystallization: proof of Theorem \ref{thm: negative0}}\label{sec:gg}

This section is devoted to the proof of  Theorem
\ref{thm: negative0}. In fact,   we prove a more precise version of the
statement under the assumption that $\alpha$ is algebraic. This in turn entails Theorem
\ref{thm: negative0} by recalling that algebraic numbers are dense in the reals. We have the following.

\begin{theorem}[Generic noncrystallization]\label{thm: negative}
Let $\alpha \in (0,1)$ be irrational and algebraic. For each $\varepsilon>0$ and  each $1$-periodic,  piecewise continuous, and  lower semicontinuous function $v_1$ with $\min v_1 = 0$, there exists a  $1$-periodic, piecewise constant, and  lower semicontinuous function $v_1^\varepsilon$  with $\min v_1^\varepsilon = 0$  
such that  $\Vert v_1 - v_1^\varepsilon \Vert_{L^\infty(0,1)} \le
\varepsilon$  and  a strictly increasing sequence $(n_k)_{k\in
  \Nz}  \in 2 \Nz$ satisfying
$$
(V^\varepsilon_{n_k})^* > (V^\varepsilon_{n_k/2})^* + (V^\varepsilon_{n_k/2})^* +1,
$$
 where $ (V^\varepsilon_{n_k})^* := \min \sum_{j=0}^{n_k-1}
 v_1^\varepsilon(x+j\alpha)$. Consequently, it is energetically
 favorable to split an $n_k$-crystal into two $n_k/2$-crystals 
 and no $n_k$-crystallization holds. 
\end{theorem}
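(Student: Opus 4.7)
My plan is to build $v_1^\varepsilon$ as a piecewise-constant perturbation of $v_1$ featuring a single small indicator bump, and to extract $(n_k)$ from unbounded oscillations of the Birkhoff sums of $\chi_I$ along the $\alpha$-rotation. First, using that $v_1$ is $1$-periodic, piecewise continuous, and lower semicontinuous, I approximate it by a piecewise-constant, $1$-periodic, l.s.c.\ $\tilde v_1\ge 0$ with $\min\tilde v_1=0$ and $\|v_1-\tilde v_1\|_{L^\infty(0,1)}\le\varepsilon/2$ (taking infima on a fine partition containing a minimizer $x_0$ of $v_1$). I then set $v_1^\varepsilon := \tilde v_1 + h\,\chi_{I+\Zz}$ with $h:=\varepsilon/2$, where $I\subset(0,1)$ is an open interval whose closure avoids $x_0$ and whose length $|I|$ lies outside $\Zz+\alpha\Zz$. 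The resulting $v_1^\varepsilon$ is piecewise constant, l.s.c., $1$-periodic, nonnegative, with $\min v_1^\varepsilon=0$ and $\|v_1-v_1^\varepsilon\|_\infty\le\varepsilon$.

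The splitting inequality reduces to a discrepancy estimate. Writing $V^\varepsilon_n(x)=\tilde V_n(x)+h N_n(x,I)$ with $N_n(x,I):=\#\{0\le j\le n-1:(x+j\alpha)\mo\in I\}$ and $\delta_n:=n|I|-\min_x N_n(x,I)\ge 0$, the elementary lower bound $(V^\varepsilon_n)^*\ge \tilde V^*_n+h(n|I|-\delta_n)$ together with the matching upper bound at $n/2$, obtained by testing $V^\varepsilon_{n/2}$ with a minimizer of $N_{n/2}(\cdot,I)$, shows that $(V^\varepsilon_n)^*>2(V^\varepsilon_{n/2})^*+1$ follows from
\begin{equation*}
h\bigl(2\delta_{n/2}-\delta_n\bigr) \;>\; 1 + C\bigl(\osc\tilde V_n + \osc\tilde V_{n/2}\bigr)
\end{equation*}
for an absolute constant $C$. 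Because $\tilde v_1$ has bounded total variation, Koksma's inequality combined with Roth's theorem on the partial quotients of the continued-fraction expansion of the algebraic irrational $\alpha$ yields $\osc\tilde V_n=O(n^\eta)$ for every $\eta>0$, so the right-hand side above grows only subpolynomially in $n$.

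To find the sequence $(n_k)$, I use that $|I|\notin\Zz+\alpha\Zz$ implies $\limsup_n \delta_n=+\infty$ by Kesten's theorem on indicator-function Birkhoff sums for irrational rotations, while the Denjoy--Koksma inequality gives $\delta_{q_j}\le V(\chi_I)\le 2$ at each continued-fraction denominator $q_j$ of $\alpha$. Using the Ostrowski numeration of $\Nz$ relative to $(q_j)$, I select integers $m_k\to\infty$ whose Ostrowski digits make $\delta_{m_k}$ grow polynomially, while arranging that the doubled integers $n_k:=2m_k$ --- which are automatically even --- have Ostrowski digits small enough to keep $\delta_{n_k}$ controllably smaller than $2\delta_{m_k}$. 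This forces the left-hand side of the displayed estimate to grow polynomially, dominating the subpolynomial right-hand side for all $k$ large; the sequence $(n_k)$ is then strictly increasing, even, and produces the desired splitting inequality at each $n_k$. Theorem~\ref{thm: negative0} finally follows from Theorem~\ref{thm: negative} by the density of algebraic irrationals in $\Rz$ and a continuity argument on the perturbation in $\alpha$.

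The \emph{main obstacle} is precisely this last extraction step: controlling how doubling $n$ affects the one-sided discrepancy $\delta_n$ along an explicit sequence. Generic irrationals do not admit such control, and it is exactly here that the algebraic assumption on $\alpha$ becomes essential: Roth's theorem forces the partial quotients of $\alpha$ to grow at most polynomially, which is what keeps the Ostrowski decomposition of the discrepancy $N_n(x,I)-n|I|$ quantitatively tractable under doubling.
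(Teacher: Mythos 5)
There are genuine gaps at the two places where your argument has to produce quantitative estimates. First, the error term $C(\osc\tilde V_n+\osc\tilde V_{n/2})$ is not under control. For a piecewise constant $\tilde v_1$ built on an \emph{arbitrary} fine partition, $\osc\tilde V_n$ is governed by the un-normalized discrepancies $n\,\phi_n$ of the partition intervals; for a fixed generic interval these are unbounded in $n$ (this is exactly what Kesten's theorem says) and are bounded above only by $C_\eta n^{\eta}$. They are therefore of the same order as, or larger than, the best possible gain $h(2\delta_{n/2}-\delta_n)$, which for algebraic $\alpha$ is itself capped by $C_\eta n^{\eta}$ for every $\eta>0$ and in fact is at best of order $\log n$ along special subsequences. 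So the displayed inequality you reduce to has no reason ever to hold. The paper neutralizes precisely this term by taking all partition intervals (except the ones adjacent to a distinguished point) of the special length $\gamma=(m\alpha)\mo$, i.e.\ bounded-remainder sets with $|\phi_n|\le C_\varepsilon/n$ (Lemma \ref{lemma: good interval length}), so the background contributes only $O(1)$ to the un-normalized error, and the entire $\log n$ gain is concentrated in the single jump at the distinguished point.

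Second, the extraction of $(n_k)$ is unsubstantiated and, as stated, impossible: you cannot make $\delta_{m_k}$ ``grow polynomially,'' since \eqref{eq: fund-upperbound} caps it at $C_\eta m_k^{\eta}$ for every $\eta>0$; Kesten's theorem gives unboundedness for a fixed $I$ but no rate; and no mechanism is offered for comparing $\delta_{2m}$ with $2\delta_m$, a min-over-$x$ quantity whose minimizer moves with $n$ --- this comparison is the whole difficulty. The paper avoids it with a soft argument you should adopt: set $e(r)=(V_r^\varepsilon)^*-r\int_0^1 v_1^\varepsilon$, prove $e(n_0)\le -C\log n_0\le -2$ along a sequence where the \emph{anchored} lower bound $\sup_x|\phi_n([0,x))|\ge C_0\log n/n$ is first localized to every $\varepsilon$-interval and then to a single point $x_1$ by a nested-interval construction (Lemmas \ref{lemma: Neps} and \ref{lemma: N sequence}, feeding Lemma \ref{th: bad approx}); then observe that if $e(2r)\le 2e(r)+1$ held at every doubling step one would get $e(n_02^k)\le -2^k$, contradicting the subexponential bound $|e(N)|\le C_\eta N^{\eta}$ of Lemma \ref{lemma: upper bound}. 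Note that algebraicity of $\alpha$ enters only through that upper bound, not through any Ostrowski/Roth control of doubling.
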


 The proof of Theorem \ref{thm: negative} is split in a series of
lemmas. The statement  crucially  relies on  some properties
of the discrepancy of additive recurrent sequences,  recall the
definition \eqref{eq:phi}. Firstly,
 for $\alpha$ irrational, \cite[Thm.\ 1.51]{Drmota} yields   a infinite subset $\mathcal{N}_\alpha \subset \Nz$  such that 
\begin{align}\label{eq: fund-lowerbound}
\sup_{x \in(0,1)}  |\phi_{n}([0,x))| \ge  C_0 \frac{\log n}{n} \quad \forall \, n\in \mathcal{N}_\alpha
\end{align}
for a universal constant $C_0>0$. 
  Secondly,  if $\alpha$ is also  algebraic, for each $\eta>0$ there exists $C_\eta>0$ such that the upper bound  
 \begin{align}\label{eq: fund-upperbound}
\sup_{0<x<y<1}  |\phi_n((x,y))| \le C_\eta n^{-1 + \eta} \quad \forall \, n\in \Nz
\end{align}
holds  \cite[Thm.\ 3.2 and Ex.\ 3.1, pp. 123-124]{Kuipers}. 
 
 Our first task is that of showing that, by possibly changing the
 constant, a lower bound like
 \eqref{eq: fund-lowerbound} holds not only  at  one  single  specific  point,  but  that each
 interval of arbitrarily small length contains  at least one 
 point fulfilling \eqref{eq: fund-lowerbound}.

\begin{lemma}[Lower bound]\label{lemma: Neps}
 Let $\alpha$ be irrational and let $\mathcal{N}_\alpha \subset \Nz$ as in \eqref{eq: fund-lowerbound}.     Let $\varepsilon>0$. Then there exists $N_\varepsilon \in \Nz$ such that for  all $n \in \mathcal{N}_\alpha$ with $n \ge N_\varepsilon$  each interval $I\subset (0,1)$ with  $|I| \ge \varepsilon$ contains a point $x \in I$  with
\begin{equation}
 |\phi_n([0,x))| \ge C_0 \frac{\log n}{2n}.\label{eq:
  fund-lowerbound0} 
\end{equation}
\end{lemma}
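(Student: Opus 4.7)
The plan is a translation argument. By \eqref{eq: fund-lowerbound}, there is $y^* \in (0,1)$ with $|F_n(y^*)| \ge L := C_0 (\log n)/n$, where I write $F_n(x) := \phi_n([0,x))$. I would find an integer $k$, bounded only in terms of $\varepsilon$ and $\alpha$, so that $x := \{y^* + k\alpha\} \in I$, and then show that $|F_n(x) - F_n(y^*)| \le k/n$, which suffices for $n$ large.

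The key technical step is the \emph{shift estimate} $|F_n(\{x+\alpha\}) - F_n(x)| \le 1/n$ for all $x \in [0,1)$ and all $n \in \Nz$. A priori this difference equals $\phi_n([x,x+\alpha) \bmod 1)$, the discrepancy of a length-$\alpha$ arc; it is only small thanks to the resonance between the arc length and the step size of the sequence $\{j\alpha\}$. To capture this, I would introduce the $1$-periodic sawtooth $\psi(t) := \{t\}-\tfrac12$ (extended at integers so that $|\psi| \le 1/2$ and $\psi$ is left-continuous) and verify the pointwise identity $\chi_{[0,x)}(p) - x = -\psi(x-p) + \tfrac12 - p$. Summing over $p = p_j$ gives
\[
F_n(x) = -\tfrac{1}{n}\sum_{j=0}^{n-1}\psi(x - p_j) + c_n, \qquad p_j := \{j\alpha\},
\]
with $c_n$ independent of $x$. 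Since $\psi(x+\alpha-p_j) = \psi(x - p_{j-1})$ for every $j \ge 0$ upon setting $p_{-1} := 1-\alpha$, the shifted sum telescopes to
\[
F_n(\{x+\alpha\}) - F_n(x) = \tfrac{1}{n}\bigl(\psi(x - p_{n-1}) - \psi(x - p_{-1})\bigr),
\]
whose modulus is bounded by $1/n$. Iterating yields $|F_n(\{x+k\alpha\}) - F_n(x)| \le k/n$ for every $k \in \Nz$.

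To conclude: since $\alpha$ is irrational, the orbit $\{\{k\alpha\}\}_{k \ge 0}$ is dense in $[0,1)$, so there exists $M = M(\varepsilon,\alpha) \in \Nz$ such that for every $y \in [0,1)$ and every interval $I \subset (0,1)$ with $|I| \ge \varepsilon$, some $k \in \{0,\ldots,M\}$ satisfies $\{y+k\alpha\} \in I$. Choosing $N_\varepsilon := \lceil e^{2M/C_0} \rceil$ ensures $M \le C_0(\log n)/2$ for $n \ge N_\varepsilon$. Applying the above with $y = y^*$ and setting $x := \{y^*+k\alpha\} \in I$, the iterated shift estimate gives
\[
|F_n(x)| \ge |F_n(y^*)| - k/n \ge L - M/n \ge L/2 = C_0(\log n)/(2n),
\]
as required. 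The main obstacle is the shift estimate, which requires the sawtooth identification and careful modular bookkeeping at the boundary indices (notably $j = 0$, using $p_{-1} := 1-\alpha$).
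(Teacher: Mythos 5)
Your proof is correct, and it takes a genuinely different route from the paper's. The paper also argues by perturbing the single point $x_0$ supplied by \eqref{eq: fund-lowerbound}, but it does so by laying down the net $x_k = x_0 + k\gamma$ with $\gamma$ the special length from Lemma~\ref{lemma: good interval length}, and controls $|\phi_n([0,x_k))-\phi_n([0,x_0))|$ by $kC_\varepsilon/n \le C_\varepsilon/(\gamma n)$ via the additivity $\phi_n([0,x_k)) = \phi_n([0,x_0)) + \sum_{l=1}^k \phi_n([x_{l-1},x_l))$. You instead translate along the orbit of the rotation itself, $x\mapsto \{x+\alpha\}$, and your sawtooth identity $F_n(x) = -\tfrac1n\sum_j \psi(x-p_j)+c_n$ yields the exact telescoping bound $|F_n(\{x+\alpha\})-F_n(x)|\le 1/n$ (I checked the identity, the left-continuity convention at integers, and the index shift via $p_{-1}=1-\alpha$; all correct). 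This makes your argument self-contained -- it does not invoke Lemma~\ref{lemma: good interval length} at all, only the elementary fact that $\{\{k\alpha\}\colon 0\le k\le M\}$ has maximal circular gap below $\varepsilon$ for some $M(\alpha,\varepsilon)$, and translation preserves gaps -- and it produces the cleaner perturbation constant $M/n$ in place of $C_\varepsilon/(\gamma n)$. The paper's version has the advantage of reusing machinery it needs elsewhere anyway. One small bookkeeping point: \eqref{eq: fund-lowerbound} is a supremum bound, so a point with $|F_n(y^*)|\ge C_0(\log n)/n$ exactly need not exist; the paper sidesteps this by selecting $x_0$ with the degraded constant $2C_0/3$, and you should do likewise (your final slack $L - M/n \ge L/2$ easily absorbs the loss). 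This is cosmetic, not a gap.
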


\begin{proof}
Given $\varepsilon>0$, choose $\gamma \in (0,\varepsilon]$ as in Lemma \ref{lemma: good interval length}. Select $N_\varepsilon$ sufficiently large such that 
\begin{align}\label{eq: Neps}
\log N_\varepsilon \ge \frac{6C_\varepsilon}{C_0\gamma},
\end{align}
where $C_\varepsilon$ is the constant of Lemma \ref{lemma: good interval length}.  Let $n \in \mathcal{N}_\alpha$ with   $n \ge N_\varepsilon$.  In view of \eqref{eq: fund-lowerbound},  we choose $x_0 \in (0,1)$ satisfying
\begin{align}\label{eq: specific choice}
|\phi_n([0,x_0))| \ge C_0 \frac{2\log n}{3n}.
\end{align}
Consider the collection of points $x_k  =   x_0  + k \gamma $, $k \in \Zz$,  with $x_k \in (0,1)$.  Note that $x_k \in (0,1)$ and $x_0 \in (0,1)$ imply   $|k| \le 1/\gamma$. For each $x_k  \in (0,1)$, $k \ge 1$, we observe that 
\begin{align*}
\phi_n([0,x_k)) = \phi_n([0,x_0)) + \sum_{l=1}^{k} \phi_n([x_{l-1},x_{l})).
\end{align*}
Therefore, since $|\phi_n([x_{l-1},x_{l}))| \le C_\varepsilon/n$ for
$l=1,\ldots,k$ by  Lemma \ref{lemma: good interval length}, we derive
 from estimates  \eqref{eq: Neps}, \eqref{eq: specific
  choice}, and $k \le 1/\gamma$  that 
$$|\phi_n([0,x_k))| \ge{  |  \phi_n([0,x_0))  | } -  \sum_{l=1}^{k} { |\phi_n([x_{l-1},x_{l}))| }  \ge C_0 \frac{2\log n}{3n} -  \frac{C_\varepsilon}{n\gamma}  \ge C_0 \frac{\log n}{2n}.$$
The same estimate holds for  each $x_k \in (0,1)$,   $k \le -1$.  The result now follows from
the fact that each interval in $(0,1)$ of length at least
$\varepsilon$ contains at least one of the points  $x_k$.
\end{proof}

 Next, we  show that,  by possibly reducing the
constant, a point fulfilling a lower bound like \eqref{eq: fund-lowerbound0}   can be chosen independently of $n$.

\begin{lemma}[Lower bound, independent of $n$]\label{lemma: N sequence}
Let $\delta \in (0,1)$. There exists a strictly increasing sequence of
integers $(n_k)_{k\in \Nz}  \subset \Nz$ and $x \in (0,\delta)$ such that
$$ |\phi_{n_k}([0,x))| \ge C_0 \frac{\log n_k}{4n_k} \quad \forall \, k \in \Nz.$$
\end{lemma}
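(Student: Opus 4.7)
The plan is to use a Cantor-type nested intervals construction to upgrade Lemma~\ref{lemma: Neps}'s $n$-dependent localization to a single point $x \in (0,\delta)$ that works for infinitely many $n$. The key auxiliary observation is the local structure of the one-sided discrepancy function $g_n(x) := \phi_n([0,x))$: from \eqref{eq:phi} it is piecewise linear with slope $-1$ and has exactly $n$ upward jumps of height $1/n$, located at the points $(j\alpha)_{\mo}$ for $j = 0, \dots, n-1$. A direct computation yields $g_n(z) - g_n(y) = -(z-y) + M/n$, where $M$ is the number of jump points lying in $[y,z)$, so whenever $[a,b]$ contains at most one jump, $|g_n(z) - g_n(y)| \le (z-y) + 1/n$ for all $y,z \in [a,b]$.

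I would then build inductively a strictly increasing sequence $n_1 < n_2 < \dots$ in $\mathcal{N}_\alpha$ and nested closed intervals $K_1 \supset K_2 \supset \cdots$, all contained in $(0,\delta)$ with $|K_k| \to 0$, such that $|g_{n_k}(x)| \ge C_0 \log n_k / (4 n_k)$ for every $x \in K_k$. Starting from a closed subinterval $K_0 \subset (0,\delta)$, at stage $k \ge 1$ I would apply Lemma~\ref{lemma: Neps} to the open interior of $K_{k-1}$ with $\varepsilon := |K_{k-1}|$ and choose $n_k \in \mathcal{N}_\alpha$ satisfying $n_k > n_{k-1}$, $n_k \ge N_\varepsilon$, and $\log n_k > 4/C_0$ (possible because $\mathcal{N}_\alpha$ is infinite). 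This produces $x^{(k)}$ in the interior of $K_{k-1}$ satisfying $|g_{n_k}(x^{(k)})| \ge C_0 \log n_k / (2 n_k)$.

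The main step, and the expected technical obstacle, is the localization: shrinking to a neighborhood of $x^{(k)}$ on which the lower bound persists. Since $g_{n_k}$ has only finitely many jumps, I can pick $\eta_k > 0$ such that $K_k := [x^{(k)} - \eta_k, x^{(k)} + \eta_k] \subset K_{k-1}$ contains no jump of $g_{n_k}$ other than possibly $x^{(k)}$ itself, and additionally $\eta_k + 1/n_k \le C_0 \log n_k / (4 n_k)$ as well as $\eta_k \le 1/k$. The variation bound of the first paragraph then gives $|g_{n_k}(x) - g_{n_k}(x^{(k)})| \le C_0 \log n_k / (4 n_k)$ for every $x \in K_k$, and the reverse triangle inequality yields $|g_{n_k}(x)| \ge C_0 \log n_k / (4 n_k)$ throughout $K_k$. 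Cantor's intersection theorem then produces $x \in \bigcap_k K_k \subset (0,\delta)$, which by construction satisfies the required bound for the subsequence $(n_k)$. The delicate point is that the interval lengths must shrink (forcing $\varepsilon \to 0$ and hence $N_\varepsilon \to \infty$), but this causes no difficulty since $\mathcal{N}_\alpha$ is infinite and arbitrarily large $n_k$ are always available.
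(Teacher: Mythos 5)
Your proof is correct and follows essentially the same route as the paper's: an inductive application of Lemma~\ref{lemma: Neps} on shrinking nested intervals inside $(0,\delta)$, followed by a Cantor intersection argument. The only (immaterial) difference lies in the localization step: the paper chooses a \emph{one-sided} interval on the side determined by the sign of $\phi_{n_k}([0,x^{(k)}))$, so that the monotone counting sum can only help and the discrepancy degrades by at most the interval length $\delta_k$ with no jump-counting needed, whereas you take a symmetric interval and absorb the possible single jump of height $1/n_k$ via the extra (harmless) requirement $\log n_k > 4/C_0$ --- both work.
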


\begin{proof}
We define the sequence of integers  $(n_k)_{k\in \Nz}$ iteratively. As first step of the iteration procedure, apply Lemma \ref{lemma: Neps} to
$(0,\delta)$ with $\varepsilon =\delta$. This gives $x_1 \in
(0,\delta)$ and $N_\varepsilon $ such that, by letting
$n_1=  \min \lbrace n \in \mathcal{N}_\alpha\colon n \ge N_\varepsilon\rbrace  $ one has
$|\phi_{n_1}([0,x_1))| \geq C_0 {\log n_1}/{(2n_1)}$. In case
$\phi_{n_1}([0,x_1))>0$, define $I_{1} := (x_1,
x_1 + \delta_1)$ for some $0<  \delta_{1} \le C_0{\log
  n_1}/{(4n_1)}$  so small  that $I_{1} \subset
(0,\delta)$. We can then compute for all  $x \in I_{1}$
\begin{align}
\phi_{n_{1}}([0,x)) & = \frac{1}{n}\sum_{j=0}^{n-1} \chi_{\Nz + [0,x)}(j\alpha) - |[0,x)| \ge \frac{1}{n}\sum_{j=0}^{n-1} \chi_{\Nz + [0,x_1)}(j\alpha) - |[0,x_1)|  - \delta_1 \nonumber\\
&= \phi_{n_1}([0,x_1)) - \delta_1 \ge   C_0 \frac{\log n_{1 }}{2n_{1}} - \delta_1   \ge    C_0 \frac{\log  n_1  }{4n_1}.\label{eq:arg}
\end{align}
 If
$\phi_{n_1}([0,x_1))<0$ instead, we repeat the argument for
$I_{1} := (x_1 - \delta_{1} , x_1) \subset (0,\delta)$
in place of $I_{1} = (x_1, x_1 + \delta_{1})$.

 Suppose now  that for $\ell \in \Nz$ there
exists  a strictly  increasing set of integers $n_k $, $1 \le k \le
\ell-1$, and nested intervals $I_{\ell-1} \subset I_{\ell-2} \subset
\ldots I_1 \subset  (0,\delta)$   such that
\begin{align}\label{eq: induction}
 |\phi_{n_k}([0,x))| \ge C_0 \frac{\log n_k}{4n_k} \quad \forall \, x \in I_{\ell-1} \quad \forall \, 1 \le k \le \ell-1.
  \end{align}
We have already checked above that \eqref{eq: induction} can be
realized for $\ell
=2$. 

 We now define $n_\ell$ and $I_\ell$ as follows.  Fix $\varepsilon
  \leq 
 |I_{\ell-1}|$. 
 By applying Lemma \ref{lemma: Neps} to interval $ I_{\ell-1}$
with $\varepsilon$ one
finds $x_\ell \in I_{\ell-1}$ and $N_\varepsilon$ such that, by letting  $n_{\ell} =
  \min \lbrace n \in \mathcal{N}_\alpha\colon n \ge  N_\varepsilon \text{ and } n \ge n_{\ell-1} +1 \rbrace$ one has 
\begin{align}\label{eq:: xell}
|\phi_{n_{\ell}}([0,x_\ell))| \ge C_0 \frac{\log n_{\ell }}{2n_{\ell}}.
\end{align} 

We now construct $I_\ell$  by arguing as above: if 
$\phi_{n_{\ell}}([0,x_\ell))>0$,  we  define $I_{\ell} := (x_\ell,
x_\ell + \delta_{\ell})$ for some $0<  \delta_{\ell} \le C_0{\log
  n_\ell}/{(4n_\ell)}$  so small  that $I_{\ell} \subset
I_{\ell-1}$.  By arguing as in \eqref{eq:arg} with the help of  \eqref{eq:: xell}, we  then compute   for all $x \in I_{\ell}$
\begin{align*}
\phi_{n_{\ell}}([0,x)) \ge    C_0 \frac{\log n_\ell}{4n_\ell}.
\end{align*}
 Along   with the induction hypothesis \eqref{eq: induction}
for $\ell -1$,  this   shows that \eqref{eq: induction} holds for all $x \in I_\ell$ and all $1 \le k \le \ell$. If $\phi_{n_{\ell}}([0,x_\ell))<0$ instead, we repeat the argument for $I_{\ell} := (x_\ell - \delta_{\ell} , x_\ell) \subset I_{\ell -1}$  in place of $I_{\ell} = (x_\ell, x_\ell + \delta_{\ell})$. 

By performing this construction for each $k \in \Nz$,  we obtain a
sequence $(x_k)_{k\in \Nz}$ and nested intervals $(I_k)_{k\in \Nz} \subset (0,1)$. Since $|I_k| \to 0$, we
 have   that $x_k \to x$,  where $x$ is the point with $\lbrace x \rbrace =  \bigcap_{k=1}^\infty I_k$.  The statement now follows from \eqref{eq: induction} and the fact that $x \in I_\ell$ for all $\ell \in \Nz$.    
\end{proof}

Next, we construct an approximation of $v_1$ such that the sufficient condition for $n$-crystallization \eqref{ergoq} is violated.

\begin{lemma}[Approximation of $v_1$]\label{th: bad approx}
Let $\alpha \in (0,1)$ be irrational. There exists a strictly
increasing sequence of integers $(n_k)_{k\in \Nz}$ such that the following
holds: for each $\varepsilon>0$ and  each $1$-periodic,  piecewise continuous, and  lower semicontinuous function $v_1$ with $\min v_1 = 0$, there exists a  $1$-periodic, piecewise constant, and  lower semicontinuous function $v_1^\varepsilon$  with $\min v_1^\varepsilon = 0$ 
such that  $\Vert v_1 - v_1^\varepsilon \Vert_{L^\infty(0,1)} \le \varepsilon$  and
\begin{align}\label{eq: bad approx0}
 \frac{1}{n_k}(V^\varepsilon_{n_k})^* - \int_0^1  v_1^\varepsilon(t) \, {\rm d}t \le - C \frac{\log n_k}{n_k}  \quad \forall k \in \Nz,
\end{align}
for some $C>0$ only depending on $\alpha$, $\varepsilon$, and $v_1$, where $ (V^\varepsilon_{n_k})^* := \min \sum_{j=0}^{n_k-1} v_1^\varepsilon(x+j\alpha)$.
   \end{lemma}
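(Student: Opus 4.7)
The plan is to combine the quantitative lower bound on discrepancy from Lemma \ref{lemma: N sequence} with the $O(1/n)$ uniform upper bound from Lemma \ref{lemma: good interval length}. First I apply Lemma \ref{lemma: N sequence} with $\delta = 1/2$ to fix once and for all a strictly increasing sequence $(n_k)_{k \in \Nz}$ and a point $x \in (0, 1/2)$ with $|\phi_{n_k}([0,x))| \ge A_k := C_0 \log n_k / (4 n_k)$ for all $k$. By passing to a further subsequence (which depends only on $\alpha$), I may assume without loss of generality that $\phi_{n_k}([0,x)) \le -A_k$ for all $k$; the opposite-sign case is treated symmetrically by using $[x,1)$ in place of $[0,x)$ in the construction below, since $\phi_{n_k}([x,1)) = -\phi_{n_k}([0,x))$.

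Given $v_1$ and $\varepsilon > 0$, I invoke Lemma \ref{lemma: good interval length} with a parameter $\varepsilon' \in (0, \min\{\varepsilon/2, x\})$ chosen small enough that the oscillation of $v_1$ on any interval of length $\gamma \le \varepsilon'$ lying inside a continuity piece of $v_1$ is at most $\varepsilon/2$. This yields $\gamma \in (0, \varepsilon']$ and $C_{\varepsilon'} > 0$ such that $|\phi_n(J)| \le C_{\varepsilon'}/n$ for every interval $J \subset \Rz$ of length $\gamma$ and every $n$. I construct a piecewise constant $\tilde v_1 = \sum_{i=1}^K c_i \chi_{J_i}$ approximating $v_1$ within $\varepsilon/2$, whose pieces $J_i$ are all intervals of length $\gamma$ (the finitely many jumps of $v_1$ being accommodated by local refinements that introduce only $O(1)$ extra pieces), and with one distinguished piece $J_{i_0}$ contained in a neighborhood $J^*$ of some $t^* \in [0,1)$ with $v_1(t^*) = 0$, on which I set $c_{i_0} = 0$, so that $\min \tilde v_1 = 0$ is attained on $J_{i_0}$. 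Finally, set
$$
v_1^\varepsilon(t) := \tilde v_1(t) + \tfrac{\varepsilon}{2}\chi_I(t), \qquad I := s + [0,x) \bmod 1,
$$
with the shift $s$ chosen so that $I \cap J_{i_0} = \emptyset$ (possible since $|I| + |J_{i_0}| < 1$). Then $v_1^\varepsilon$ is $1$-periodic, piecewise constant, lower semicontinuous, nonnegative, with $\min v_1^\varepsilon = 0$, and $\|v_1 - v_1^\varepsilon\|_\infty \le \varepsilon$.

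To verify \eqref{eq: bad approx0}, I evaluate at $y = s$, using $\chi_{\Nz+I}(s + j\alpha) = \chi_{\Nz+[0,x)}(j\alpha)$:
$$
\frac{V_{n_k}^\varepsilon(s)}{n_k} - \int_0^1 v_1^\varepsilon \, \d t
= \sum_{i=1}^K c_i \, \phi_{n_k}(J_i - s) + \tfrac{\varepsilon}{2} \phi_{n_k}([0,x)).
$$
Each translate $J_i - s$ is an interval of length $\gamma$, so Lemma \ref{lemma: good interval length} gives $|\phi_{n_k}(J_i - s)| \le C_{\varepsilon'}/n_k$; hence the first sum is $O(1/n_k)$ with constant depending only on $\alpha$, $\varepsilon$, and $v_1$. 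The second term is $\le -(\varepsilon C_0/8) \log n_k/n_k$ by the sign fix. Since $\log n_k/n_k$ eventually dominates $1/n_k$, for $k$ beyond some threshold $k_0$ the right-hand side is $\le -(\varepsilon C_0/16) \log n_k/n_k$; the finitely many initial indices are absorbed into $C$ by shrinking it, using that the left-hand side is strictly negative for each fixed $k$ (as $V_{n_k}^\varepsilon$ is nonconstant). Since $(V_{n_k}^\varepsilon)^* \le V_{n_k}^\varepsilon(s)$, the conclusion follows. The main obstacle is precisely this first sum: a generic piecewise constant approximation of $v_1$ would only yield the weaker algebraic bound $O(n_k^{-1+\eta})$ from \eqref{eq: fund-upperbound}, which fails to be dominated by the $(\log n_k)/n_k$ gain from the bump $(\varepsilon/2)\chi_I$. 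This forces $\tilde v_1$ to be built specifically from the ``good'' $\gamma$-length indicators of Lemma \ref{lemma: good interval length}, and accommodating $v_1$'s jump discontinuities within this structure is the most delicate technical point.
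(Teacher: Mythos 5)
Your overall architecture is the right one and partly diverges from the paper's in an interesting way: you get the $(\log n_k)/n_k$ gain by adding an explicit bump $\tfrac{\varepsilon}{2}\chi_I$ over a translate of $[0,x)$ along a sign-fixed subsequence, whereas the paper builds the gain into the jump $b_1<b_2$ of $v_1^\varepsilon$ across the distinguished point $x_1$ and treats the two signs of $\phi_{n_k}([0,x_1))$ symmetrically. That part of your argument is sound (the subsequence depends only on $\alpha$, so the sequence remains universal, and the evaluation at $y=s$ is computed correctly).

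The gap is exactly where you flag it, and it is not a removable technicality. The bound $\bigl|\sum_i c_i\,\phi_{n_k}(J_i-s)\bigr|\le C/n_k$ requires \emph{every} piece $J_i$ to have length exactly $\gamma$, and this cannot be arranged: $1/\gamma\notin\Nz$, and the jump points of $v_1$ do not sit on a $\gamma$-grid, so any partition underlying $\tilde v_1$ contains leftover intervals of length different from $\gamma$ (your ``local refinements'' create more of them). For such intervals Lemma \ref{lemma: good interval length} gives nothing, and for a general irrational $\alpha$ (the lemma is not restricted to algebraic $\alpha$) the discrepancy of an arbitrary interval along precisely the sequence $(n_k)$ --- chosen so that $\sup_x|\phi_{n_k}([0,x))|$ is at least of order $(\log n_k)/n_k$, and possibly far larger for badly approximable $\alpha$ --- can swamp the $-(\varepsilon C_0/8)(\log n_k)/n_k$ gain from the bump. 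The missing idea is the identity $\sum_i\phi_{n_k}(J_i-s)=\phi_{n_k}([0,1))=0$: with a \emph{single} off-length piece one can telescope, $\sum_i c_i\phi_{n_k}(J_i-s)=\sum_{i\neq i_1}(c_i-c_{i_1})\phi_{n_k}(J_i-s)$, and recover $O(1/(\gamma n_k))$; the paper arranges its grid so that only two adjacent off-length intervals occur, both at $x_1$, and rearranges $b_1\phi_1+b_2\phi_2=(b_1-b_2)\phi_1+b_2(\phi_1+\phi_2)$ so that the uncontrolled coefficient multiplies only the controlled sum. This does not transfer to several off-length intervals scattered at the jumps of $v_1$ with unrelated coefficients, so your error term is genuinely unbounded as written; your construction works as stated only for continuous $v_1$ (one leftover piece, telescoping applies). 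A smaller point: absorbing the finitely many initial $k$ into $C$ uses that $V^\varepsilon_{n_k}$ is nonconstant, which is true (e.g.\ via nonvanishing Fourier coefficients, since $\alpha$ is irrational) but should be justified.
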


\begin{proof}
Our goal is approximate $v_1$ by a piecewise constant function with the desired property. Fix $\delta >0$ small.  We apply Lemma \ref{lemma: N sequence} to get a sequence $(n_k)_{k\in \Nz}$ and $x_1 \in (0,\delta)$ such that
\begin{align}\label{eq: bad approx1}
|\phi_{n_k}([0,x_1))| \ge \frac{C_0 \log n_k}{4n_k}\quad \forall\,  k \in \Nz
\end{align}
holds. Moreover, we choose $\gamma \in (0, \delta]$ as in Lemma
\ref{lemma: good interval length} (applied for $\delta$ in place of
$\varepsilon$), and we decompose the interval $(0,1)$ by means of
 the points  $x_0 = 0 < x_1 < x_2 < \ldots < x_l =1$ such that 
\begin{align*}
x_i - x_{i-1} = \gamma  \quad \forall \, i = 3, \ldots,l.
\end{align*}   
Note that $l \le 2 + 1/\gamma$  and that $|x_i - x_{i-1}| \le \delta$ for $i=1,\ldots,l$.  Since Lemma  \ref{lemma: good interval length} implies $|\phi_{n_k}([x_{i-1},x_i))| \le C_\delta/n_k$ for all $i=3,\ldots,l$, we find 
\begin{align}\label{eq: bad approx3}
\big|\phi_{n_k}([0,x_1)) + \phi_{n_k}([x_1,x_2)) \big| \le  \left |\sum_{i=3}^l  \phi_{n_k}([x_{i-1},x_i)) \right|  \le \frac{C_\delta}{\gamma n_k}.    
\end{align}
Without restriction we treat the case $\phi_{n_k}([0,x_1))>0$. The
other case is similar  but requires  a different notational realization. 
We define a  piecewise constant, $1$-periodic function $v_1^\varepsilon$ by setting
\begin{align}\label{eq: bad approx4}
v_1^\varepsilon = b_i \quad \text{ on } (x_{i-1},x_i) \quad \text{ for
  } \ i =1 \ldots,l 
\end{align}
for suitable $b_i \in [0,M]$, where $M:= \sup v_1$. The values at
$x_i$, $i=0\ldots,l-1$, can be chosen in such a way that the function is  lower semicontinuous. Recall $|x_i - x_{i-1}| \le \delta$ for $i=1,\ldots,l$. Thus,  given $\varepsilon >0$, we observe that by choosing $\delta=\delta(\varepsilon)>0$ sufficiently small and the values $b_i$ appropriately, we can achieve   $\Vert v_1^\varepsilon - v_1 \Vert_{L^\infty(0,1)} \le \varepsilon$.  Moreover, this can be done in such a way that $b_1 < b_2$  and that $\min_i b_i = 0$.   We now check \eqref{eq: bad approx0}. First, in view of definition \eqref{eq: bad approx4} and \eqref{eq:phi}, we compute
\begin{align}\label{eq: bad approx5}
 &\min_{x\in  [0,1)  } \frac{1}{n_k}\sum_{j=0}^{n_k-1} v_1^\varepsilon(x+j\alpha) - \int_0^1 v_1^\varepsilon(t) \,{\rm d}t \le \frac{1}{n_k}\sum_{j=0}^{n_k-1} v_1^\varepsilon(j\alpha) - \int_0^1 v_1^\varepsilon(t) \,{\rm d}t \notag \\
 & \quad\le \sum_{i=1}^l  b_i\Big(   \frac{1}{n_k}\sum_{j=0}^{n_k-1}   \chi_{[x_{i-1},x_{i}) + \Nz} (j\alpha  )   -  |x_i-x_{i-1}|\Big)\notag
 \\ &  \quad=  \sum_{i=1}^l   b_i \, \phi_{n_k}([x_{i-1},x_i)).
\end{align}
Moreover, by $\max_i b_i \le M$, $b_1 - b_2<0$, and
$\phi_{n_k}([0,x_1))>0$   we obtain  from estimates 
\eqref{eq: bad approx1} and \eqref{eq: bad approx3}  that 
\begin{align*}
 &\sum_{i=1}^l   b_i \, \phi_{n_k}([x_{i-1},x_i)) \le   (b_1 - b_2) \phi_{n_k}([x_{0},x_1) ) +   M \big|\phi_{n_k}([x_{0},x_1)) + \phi_{n_k}([x_{1},x_2))\big| \\
 & \quad+   M   \sum_{i=3}^l \big| \phi_{n_k}([x_{i-1},x_i)) \big| \\
 & \quad\le   (b_1 - b_2) \frac{C_0 \log n_k}{4n_k}  + \frac{2M C_\delta}{\gamma n_k}.
\end{align*}
This along with \eqref{eq: bad approx5} shows 
$$ \min_{x\in [0,1)  } \frac{1}{n_k}\sum_{j=0}^{n_k-1} v_1^\varepsilon(x+j\alpha) - \int_0^1 v_1^\varepsilon(t) \,{\rm d}t \le  - C\frac{\log n_k}{n_k} $$
for all $k \in \Nz$ for some suitable $C>0$ only depending on $\alpha$, $\delta$, and $v_1$.
 \end{proof}

 Eventually, we establish  the following upper bound.

\begin{lemma}[Upper bound]\label{lemma: upper bound}
Let $\alpha \in (0,1)$ be irrational and algebraic. Let $v_1$ be a   $1$-periodic, lower semicontinuous function of the form $v_1 = \sum_{i=1}^k b_i \chi_{I_i}$ for intervals $I_i \subset (0,1)$. Then there holds 
$$
\left| \frac{1}{n}V_{n} - \int_0^1  v_1(t) \, {\rm d}t  \right|\le C_\eta \, n^{-1+ \eta} \quad \forall \, n \in \Nz,
$$
for some $C_\eta>0$ only depending on $\alpha$, $\eta$, and $v_1$. 
   \end{lemma}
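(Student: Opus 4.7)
The plan is to reduce the estimate directly to the discrepancy bound \eqref{eq: fund-upperbound} by exploiting the piecewise constant structure of $v_1$. Although the statement writes $V_n$ without a specific argument, I will prove the bound uniformly in $x \in [0,1)$, which immediately implies the lemma and in particular transfers to $V_n^*$.

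First, for fixed $x \in [0,1)$, I would use the $1$-periodicity of $\chi_{I_i}$ to rewrite $\chi_{I_i}(x+j\alpha) = \chi_{\Nz + J_i(x)}(j\alpha)$, where $J_i(x) := (I_i - x)_{\mo} \subset [0,1)$. Because the translate $I_i - x$ may straddle an integer, $J_i(x)$ decomposes as the union of at most two disjoint subintervals $J_i^{(1)}(x), J_i^{(2)}(x) \subset [0,1)$ with $|J_i^{(1)}(x)| + |J_i^{(2)}(x)| = |I_i|$. By the definition \eqref{eq:phi} of the discrepancy and $\int_0^1 v_1(t)\,\d t = \sum_{i=1}^k b_i |I_i|$, a straightforward linearity computation gives
\[
\frac{1}{n} V_n(x) - \int_0^1 v_1(t)\,\d t = \sum_{i=1}^k b_i\Big(\phi_n(J_i^{(1)}(x)) + \phi_n(J_i^{(2)}(x))\Big).
\]

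Second, I would invoke the algebraic upper bound \eqref{eq: fund-upperbound}, which gives $|\phi_n(J_i^{(\ell)}(x))| \le C_\eta n^{-1+\eta}$ uniformly in $x$ and $\ell$, and conclude with $C_\eta' := 2 C_\eta \sum_{i=1}^k |b_i|$, depending only on $\alpha$, $\eta$, and $v_1$. This is the only place where the hypothesis that $\alpha$ is algebraic enters.

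The main obstacle is not analytical but notational: correctly tracking the wrap-around of $I_i - x$ modulo $1$, and handling the minor technicality that \eqref{eq: fund-upperbound} is stated for open subintervals of $(0,1)$ whereas $J_i^{(\ell)}(x)$ may be half-open or contain the endpoint $0$. As remarked in the proof of Lemma \ref{lemma: good interval length}, replacing one endpoint type by another changes the counting by at most one (since $(j\alpha)_{j \in \Nz}$ visits each point at most once when $\alpha$ is irrational), which contributes an $O(1/n)$ error that is absorbed into $C_\eta'$. Beyond this bookkeeping I anticipate no real difficulty.
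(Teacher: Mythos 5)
Your proposal is correct and follows essentially the same route as the paper: decompose the piecewise constant $v_1$ into its indicator components, express the deviation of $\tfrac1n V_n$ from $\int_0^1 v_1$ as a linear combination of discrepancies $\phi_n$ of intervals, and invoke the algebraic-irrational discrepancy bound \eqref{eq: fund-upperbound}. You are in fact somewhat more careful than the paper's one-line proof, since you explicitly track the translation $I_i - x$ and its wrap-around modulo $1$ as well as the endpoint conventions, both of which the paper leaves implicit.
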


\begin{proof}
 Arguing  as in the proof  of the approximation Lemma \ref{th: bad approx}, we calculate
$$\left| \frac{1}{n} V_{n}  - \int_0^1  v_1(t) \, {\rm d}t  \right| \le \big(\max_i b_i\big) \sum_{i=1}^k { | \phi_n(I_i)|,   }$$
where $\phi_n(I_i)$ is defined in \eqref{eq:phi}. The statement follows from the fact that $| \phi_n(I_i)  |  \le C_\eta n^{-1+\eta}$ for all $n \in \Nz$, see \eqref{eq: fund-upperbound}.
\end{proof}

We are finally in  the position of proving Theorem \ref{thm: negative}.

\begin{proof}[Proof of Theorem \emph{\ref{thm: negative}}.]

Given $v_1$,  $\alpha \in (0,1)$ irrational and algebraic, and
$\varepsilon>0$, we define the   piecewise constant,  lower semicontinuous function
$v_1^\varepsilon$ as in   the approximation Lemma \ \ref{th: bad
  approx}. We  aim at showing  that for each $n' \in \Nz$ we
 can  find $n \in 2\Nz$ with $n  \ge n'$ such that it is energetically favorable to split an $n$-crystal into two $n/2$-crystals, i.e.,
\begin{align}\label{eq: main claim}
(V^\varepsilon_{n})^* > (V^\varepsilon_{n/2})^* + (V^\varepsilon_{n/2})^* +1.
\end{align}
For all $r\in \Nz$, define the \emph{error} as 
\begin{align}\label{eq: error}
e(r) :=   (V^\varepsilon_r)^* - r\int_0^1  v_1^\varepsilon  (t)\, \d t.
\end{align}
Fix $n' \in \Nz$. By the approximation Lemma  \ref{th: bad approx} we find $n_0 \ge n'$ such that $e(n_0) < - C \log n_0$ for $C$ only depending on $\alpha$, $\varepsilon$, and $v_1$. We can suppose that $n_0$ is chosen large enough such that  $e(n_0) \le -2$.   We claim that there exists $k \in \Nz$ such that 
\begin{align}\label{eq: intermediate claim}
2 \, e(n_0 2^{k-1})   +1 < e(n_0 2^{k}).
\end{align}
 In fact, assume that this  was  not the case. Then,  we would
have  $2 e(n_0 2^{k-1})   +1 \ge  e(n_0 2^{k})$ for all $k \in
\Nz$. Consequently, by an  iterative application of this
estimate and  by  using $e(n_0) \le -2$  we get
\begin{align*}
e(n_0 2^k) \le 2^k e(n_0) + 2^k-1 \le - 2^k  \quad \forall \, k \in \Nz.
\end{align*}
This contradicts the fact that $|e(n_0 m)| \le C_\eta (n_0m)^{\eta}$
for all $m \in \Nz$,  as predicted by the upper bound in 
Lemma \ref{lemma: upper bound}. Thus, \eqref{eq: intermediate claim}
holds for some $k \in \Nz$. Set  now  $n:= n_0 2^k$  and
use \eqref{eq: error} to compute 
\begin{align*}
(V^\varepsilon_{n})^* =  e(n) +n\int_0^1 v^{ \varepsilon  } (t)\, \d t > 1 + 2 \left( e(n/2)  + \frac{n}{2}\int_0^1 v^{ \varepsilon }_1(t)\, \d t \right) = 1+ 2(V^\varepsilon_{n/2})^*. 
\end{align*}
This shows \eqref{eq: main claim} and concludes the proof.     
\end{proof}

\section*{Acknowledgements}  
  MF is supported by  the Deutsche Forschungsgemeinschaft (DFG, German
  Research Foundation) under Germany's Excellence Strategy EXC
  2044-390685587, Mathematics M\"unster:
  Dynamics--Geometry--Structure. US is partially supported by the 
  Austrian Science Fund (FWF) 
   project   F\,65 and by the 
Vienna Science and Technology Fund (WWTF) project MA14-009.

\bibliographystyle{alpha}

\begin{thebibliography}{99}



 
\bibitem{Assoud}
L. Assoud, R. Messina, H. L\"owen. Stable crystalline lattices in two-dimensional
binary mixtures of dipolar particles. {\it Europh. Lett.} 80 (2007),
1--6. 


\bibitem{Assoud2}
L. Assoud, R. Messina, H. L\"owen. Binary crystals in
two-dimensional two component Yukawa mixtures. {\it J. Chem. Phys.}
129 (2008), 164511.






\bibitem{Betermin}
{L.~B\'etermin, H.~Kn\"upfer, F.~Nolte}.
\newblock  {Crystallization of  one-dimensional alternating two-component systems}. 
\newblock Preprint at \href{https://arxiv.org/abs/1804.05743}{\tt arXiv:1804.05743}.






\bibitem{Blanc}
{X.~Blanc, M.~Lewin}.
\newblock {The crystallization conjecture: a review}.
\newblock {\it EMS Surv.\ Math.\ Sci.}  
\newblock {2} (2015), 255--306.


















\bibitem{Drmota} M. Drmota, R. F. Tichy. {\it Sequences, discrepancies and applications}. Lecture Notes in Mathematics, 1651, Springer, Berlin, 1997.








 
\bibitem{Eldridge}
M. D. Eldridge, P. A. Madden, D. Frenkel.
Entropy-driven formation of a superlattice in a hard-sphere binary mixture.
{\it Nature} 365 (1993), 35--37 


\bibitem{Fanzon}
S. Fanzon, M. Ponsiglione, R. Scala.
Uniform distribution of dislocations in Peierls-Nabarro models for
semi-coherent interfaces. Submitted, 2019. Preprint at  \href{https://arxiv.org/abs/1908.04222}{\tt  arXiv:1908.04222}.







\bibitem{kreutz}
{M.~Friedrich, L. Kreutz}. 
\newblock {Crystallization in the hexagonal lattice for ionic
  dimers}. {\it Math. Models Meth. Appl. Sci.} (2019), to appear. Preprint at \href{https://arxiv.org/abs/1808.10675}{\tt
  arXiv:1808.10675}.



\bibitem{kreutz2}
{M.~Friedrich, L.~Kreutz}. 
\newblock  {Finite crystallization and Wulff shape emergence for ionic compounds
    in the square lattice}.  Submitted, 2019. Preprint at  \href{https://arxiv.org/abs/1903.00331}{\tt  arXiv:1903.00331}.








\bibitem{Friesecke-Theil15}
{G.~Friesecke, F.~Theil}. 
\newblock {Molecular geometry optimization,
  models}. In the {\it Encyclopedia of Applied and Computational Mathematics},
B. Engquist (Ed.), Springer, 2015.



 \bibitem{Gardner}
C. S. Gardner, C. Radin.
The infinite-volume ground state of the Lennard-Jones potential. {\it
  J. Stat. Phys.} 20 (1979), 719--724.







\bibitem{Hamrick} G. C. Hamrick, C. Radin. The symmetry of
  ground states under perturbation. {\it J. Stat. Phys.} 21 (1979),
  601--607.








\bibitem{Heitman-Radin80}
R. Heitman, C. Radin.
Ground states for sticky disks. {\it J. Stat. Phys.} 22 (1980), 281--287.


 






\bibitem{Kuipers} L. Kuipers, H. Niederreiter. {\it Uniform
    distribution of sequences}. Pure and Applied
  Mathematics. Wiley-Interscience, John Wiley \& Sons, New York-London-Sydney, 1974.




\bibitem{Levi}
E. Levi, J. Minar, I. Lesanovsky. Crystalline structures in a one-dimensional
two- component lattice gas with $1/r\alpha$ interactions. {\it
  J. Stat. Mech. Theor. Exp.} (2016),  033111.

\bibitem{Lewars}
{E.~G.~Lewars}.
\newblock {\em Computational Chemistry}. 2nd edition, Springer, 2011.









\bibitem{Radin86} C. Radin. Crystals and quasicrystals: a continuum model. {\it Comm. Math. Phys.} 105 (1986), 385--390.












 

   





  \bibitem{Ventevogel}
{ W.~J.~Ventevogel, B.~R.~A.~Nijboer}. 
\newblock {On the configuration of systems of interacting atom with
minimum potential energy per atom}. 
\newblock  {\it Phys.\ A} 99 (1979), 565--580.










 



\bibitem{Xu}
H. Xu, M. Baus. A density functional study of superlattice formation in binary
hard-sphere mixtures. {\it  J. Phys. Condens. Matter} 4 (1992),
L663.










  
  
 







  

\end{thebibliography}

\end{document}